\newcommand{\be}{\begin{equation}}
\newcommand{\ee}{\end{equation}}
\newcommand{\bea}{\begin{eqnarray}}
\newcommand{\eea}{\end{eqnarray}}
\newcommand{\ri}{\mathrm{i}}
\newcommand{\bN}{\mathbb{N}}
\newcommand{\bR}{\mathbb{R}}
\newcommand{\bC}{\mathbb{C}}
\newcommand{\cP}{\mathcal{P}}
\newcommand{\cO}{\mathcal{O}}
\newcommand{\cV}{\mathcal{V}}
\newcommand{\cF}{\mathcal{F}}
\newcommand{\cA}{\mathcal{A}}
\newcommand{\cM}{\mathcal{M}}
\newcommand{\cS}{\mathcal{S}}
\newcommand{\mfa}{\mathfrak{a}}
\newcommand{\mfc}{\mathfrak{c}}
\newcommand{\mfm}{\mathfrak{m}}
\newcommand{\mfu}{\mathfrak{u}}
\newcommand{\mfg}{\mathfrak{g}}
\newcommand{\mfgl}{\mathfrak{gl}}
\newcommand{\mfk}{\mathfrak{k}}
\newcommand{\mfp}{\mathfrak{p}}
\newcommand{\mfX}{\mathfrak{X}}
\newcommand{\mfs}{\mathfrak{s}}
\newcommand{\mfL}{\mathfrak{L}}
\newcommand{\bsone}{\boldsymbol{1}}
\newcommand{\bsq}{\boldsymbol{q}}
\newcommand{\bslambda}{\boldsymbol{\lambda}}
\newcommand{\bsX}{\boldsymbol{X}}
\newcommand{\bsV}{\boldsymbol{V}}
\newcommand{\bss}{\boldsymbol{s}}
\newcommand{\diag}{\text{diag}}
\newcommand{\pr}{{\text{pr}}}
\newcommand{\ad}{\mathrm{ad}}
\newcommand{\Ad}{\mathrm{Ad}}
\newcommand{\tr}{\mathrm{tr}}
\newcommand{\Id}{\mathrm{Id}}
\newcommand{\bc}{\text{BC}}
\newcommand{\ddd}{\mathrm{d}}
\newcommand{\ext}{\mathrm{ext}}
\newcommand{\acts}{\, . \,}
\newcommand{\inner}{\, \lrcorner \,}
\newcommand{\half}{\frac{1}{2}}
\theoremstyle{plain}
\newtheorem{THEOREM}{Theorem}
\newtheorem{LEMMA}[THEOREM]{Lemma}
\begin{document}
\begin{center}
\Large{\textbf{
The hyperbolic $BC_n$ Sutherland and the rational 
$BC_n$ Ruijsenaars--Schneider--van Diejen models:
Lax matrices and duality
}}
\end{center}
\bigskip
\begin{center}
B.G.~Pusztai\\
Bolyai Institute, University of Szeged,\\
Aradi v\'ertan\'uk tere 1, H-6720 Szeged, 
Hungary\\
e-mail: \texttt{gpusztai@math.u-szeged.hu}
\end{center}
\bigskip
\begin{abstract}
In this paper, we construct canonical action-angle 
variables for both the hyperbolic $BC_n$ Sutherland 
and the rational $BC_n$ 
Ruijsenaars--Schneider--van Diejen models with 
\emph{three independent coupling constants}.
As a byproduct of our symplectic reduction approach, 
we establish the action-angle duality between these
many-particle systems. The presented dual reduction 
picture builds upon the construction of a Lax 
matrix for the $BC_n$-type rational 
Ruijsenaars--Schneider--van Diejen model.

\bigskip
\noindent
\textbf{Keywords:} \emph{Integrable systems; 
Action-angle duality}

\smallskip
\noindent
\textbf{Mathematics Subject Classification 2010:}
17B80; 37J35; 53D20; 70G65

\smallskip
\noindent
\textbf{PACS number:} 02.30.Ik
\end{abstract}
\newpage

\section{Introduction}
\label{S1}
\setcounter{equation}{0}
The Calogero--Moser--Sutherland (CMS) and the 
Ruijsenaars--Schneider--van Diejen (RSvD) 
interacting many-particle models play a 
distinguished role among the integrable 
Hamiltonian systems, having numerous 
relationships to important fields of 
mathematics and physics. They have profound 
applications in the theory of solitons 
(see e.g. \cite{RuijSchneider},
\cite{RuijFiniteDimSolitonSystems},
\cite{BabelonBernard}, 
\cite{KapustinSkorik}), 
and recently they appeared in the context of 
random matrix theory as well 
(see e.g.
\cite{Bogomolny_PRL}, 
\cite{Bogomolny}).
Quite surprisingly, these intriguing relationships 
are well-understood only for the models associated
with the $A_n$ root system. It appears that the 
main technical obstacle for developing analogous 
theories in association with the non-$A_n$-type 
root systems is the lack of knowledge of explicit 
action-angle variables for the non-$A_n$-type CMS 
and RSvD models. In this paper we wish to 
narrow this gap by constructing action-angle 
systems of canonical coordinates for both the 
hyperbolic $BC_n$ Sutherland and the rational 
$BC_n$ RSvD models. 

In order to define the $BC_n$-type hyperbolic 
Sutherland and the rational RSvD many-particle 
systems, we first introduce the subset
\be
\mfc = \{ x = (x_1, \ldots, x_n) \in \bR^n
\, | \,
x_1 > \ldots > x_n > 0 \}
\subset \bR^n,
\label{mfc}
\ee
which can be seen as an appropriate model
for the open Weyl chamber of type $BC_n$.
Let us recall that the phase space of the
Sutherland model is the cotangent bundle of 
$\mfc$, for which we have the natural 
identification
\be
\cP^S = \mfc \times \bR^n
= \{ (q, p) 
\, | \, 
q \in \mfc \, , p \in \bR^n \}.
\label{cP_S}
\ee
Recall also that the hyperbolic $BC_n$ 
Sutherland dynamics is generated by the 
interacting many-particle Hamiltonian
\be
\begin{split}
H^S
= & \half \sum_{c = 1}^n p_c^2
+ \sum_{1 \leq a < b \leq n}
\left(
\frac{g^2}{\sinh^2(q_a - q_b)}
+ \frac{g^2}{\sinh^2(q_a + q_b)}
\right) 
\\
& + \sum_{c = 1}^n \frac{g_1^2}{\sinh^2(q_c)}
+ \sum_{c = 1}^n \frac{g_2^2}{\sinh^2(2 q_c)},
\label{H_S}
\end{split}
\ee
where the so-called coupling parameters
$g$, $g_1$ and $g_2$ are arbitrary real
numbers satisfying the inequalities $g^2 > 0$ 
and $g_1^2 + g_2^2 > 0$. In other words, we are 
interested only in the hyperbolic $BC_n$
Sutherland model\footnote{Notice that with
the specialization $g_2 = 0$ we recover
the hyperbolic $B_n$ Sutherland particle system,
meanwhile with $g_1 = 0$ we obtain the
$C_n$-type model.} with purely \emph{repulsive} 
interaction. Since the strength of the interaction 
is governed by the numbers
$g^2$, $g_1^2$ and $g_2^2$, they are usually 
called the coupling constants.  

Though the non-$A_n$-type CMS models
have received a lot of attention in the
last couple of decades (see e.g. the 
fundamental papers \cite{OlshaPere76}, 
\cite{OlshaPere} and the book
\cite{PerelomovBook}), the symplectic
reduction understanding of the $BC_n$ Sutherland
model with three independent coupling constants
is a quite recent development 
\cite{FeherPusztai2007}. 
Besides providing a Lax representation of the 
dynamics, the symplectic reduction approach has
the advantage that it naturally leads to a 
fairly simple solution algorithm of purely
algebraic nature. By pushing forward the reduction 
picture, in this paper we are able to furnish 
action-angle variables for the standard
$BC_n$ Sutherland model with repulsive 
interaction.

The non-$A_n$-type deformations of the classical 
Ruijsenaars--Schneider many-particle systems have 
been introduced by van Diejen \cite{vanDiejen1994}. 
Just as for the Sutherland model, the phase 
space of the rational $BC_n$ RSvD model is the 
cotangent bundle $T^* \mfc$, which is naturally 
identified with the manifold
\be
\cP^R = \mfc \times \bR^n
= \{ (\lambda, \theta) 
\, | \, 
\lambda \in \mfc \, , \theta \in \bR^n \}.
\label{cP_R}
\ee
Recall that the rational RSvD dynamics is 
characterized by the Hamiltonian
\be
\begin{split}
H^R 
= & \sum_{c = 1}^{n} \cosh(2 \theta_c)
\left( 
1 + \frac{\nu^2}
{\lambda_c^2} 
\right)^\frac{1}{2}
\left( 
1 + \frac{\kappa^2}
{\lambda_c^2} 
\right)^\frac{1}{2}
\prod_{\substack{d = 1 \\ (d \neq c)}}^{n}
\left( 1 
+ \frac{4 \mu^2}
{(\lambda_c - \lambda_d)^2} \right)^\frac{1}{2}
\left( 1 
+ \frac{4 \mu^2}
{(\lambda_c + \lambda_d)^2} \right)^\frac{1}{2}
\\
& + \frac{\nu \kappa}{4 \mu^2}
\prod_{c = 1}^n 
\left(
1 + \frac{4 \mu^2}{\lambda_c^2}
\right)
- \frac{\nu \kappa}{4 \mu^2},
\label{H_R}
\end{split}
\ee
where $\mu$, $\nu$ and $\kappa$ are arbitrary
real parameters satisfying $\mu \neq 0 \neq \nu$. 
Although the Liouville integrability of the 
non-$A_n$-type RSvD models 
has been verified \cite{vanDiejen1994}, the
Lax representation of their dynamics is
still missing. However, by generalizing our 
results on the $C_n$-type RSvD model 
\cite{Pusztai_NPB2011}, in this paper we provide 
a Lax matrix and an elementary solution algorithm 
for the rational RSvD model (\ref{H_R})
with $\nu \kappa \geq 0$. Moreover, the 
proposed reduction approach permits us to 
construct action-angle variables as well.

The organization of the paper can be outlined
as follows. Section \ref{S2} is devoted to a 
brief account on the necessary group theoretic 
and symplectic geometric background underlying 
the derivation of the Sutherland and the RSvD 
models from a unified symplectic reduction 
framework. In Section \ref{S3} we review the 
symplectic reduction understanding of the $BC_n$ 
Sutherland model. Although this is a standard 
material (see \cite{FeherPusztai2007}),
our new contribution on the spectral properties
of the Lax matrix of the Sutherland model, 
formulated in Lemma \ref{L1}, seems to be crucial 
in advancing the reduction approach to cover the 
RSvD model, too. Starting with Section \ref{S4} 
we present our new results on the rational $BC_n$
RSvD model. By fitting the $BC_n$ RSvD model
into a convenient symplectic reduction picture, 
we are able to provide a Lax matrix and an
elementary solution algorithm as well. The main 
technical result is Theorem \ref{T5},
in which we confirm that the parametrization
of the Lax matrix of the RSvD model does provide 
a Darboux system on the reduced phase space. In 
Section \ref{S5} we elaborate on the consequences 
of the proposed reduction approach. In particular, 
a natural construction of canonical action-angle 
variables for both the Sutherland and the RSvD 
model comes for free. Furthermore, the action-angle 
duality between the repulsive $BC_n$ Sutherland 
model and the $BC_n$ RSvD system with 
$\nu \kappa \geq 0$ becomes also transparent.

This paper is a continuation of our recent work 
\cite{Pusztai_NPB2011} on the hyperbolic Sutherland 
and the rational RSvD models associated with the 
$C_n$ root system. It is a very fortunate situation 
that many results for the models associated with 
the $BC_n$ root system can be derived almost 
effortlessly by generalizing the analogous results 
of the $C_n$-type particle systems. Since in 
\cite{Pusztai_NPB2011} we have carried out a very 
detailed analysis on the particle systems of
type $C_n$, in this paper we can be brief on many 
aspects of the $BC_n$-type models. Though our 
presentation tries to be self-contained, in this 
paper we rather focus on differences between the 
$C_n$-type and the $BC_n$-type models, and we 
provide proofs only for those facts that have no 
natural analogs in the $C_n$ case. Therefore the 
reader may find it useful to have a copy of 
\cite{Pusztai_NPB2011} on hand while reading 
the paper.

\section{Preliminaries}
\label{S2}
\setcounter{equation}{0}
In this section we gather the necessary group
theoretic and symplectic geometric material 
underlying the unified symplectic reduction 
derivation of the hyperbolic $BC_n$ Sutherland 
and the rational $BC_n$ RSvD models. Throughout 
the paper our group theoretic conventions try 
to be consistent with the book \cite{Knapp}, 
whereas the symplectic geometric conventions 
come mainly from \cite{AM}. To facilitate the 
comparison with our work on the $C_n$-type models, 
the majority of the notations are directly 
borrowed from paper \cite{Pusztai_NPB2011}. 

Take an arbitrary positive integer
$n \in \bN = \{1, 2, \ldots \}$ and let $N = 2 n$. 
With the aid of the $N \times N$ unitary matrix
\be
C = \begin{bmatrix}
0_n & \bsone_n \\
\bsone_n & 0_n
\end{bmatrix} \in U(N)
\label{C}
\ee
we define the non-compact real reductive matrix
Lie group
\be
G = U(n, n) 
= \{ y \in GL(N, \bC) 
\, | \, 
y^* C y = C \}.
\label{G}
\ee
The corresponding real matrix Lie algebra has 
the form
\be
\mfg = \mfu(n, n)
= \{ Y \in \mfgl(N, \bC) 
\, | \, 
Y^* C + C Y = 0 \},  
\label{mfg}
\ee
on which the map
\be
\langle \, , \rangle \colon 
\mfg \times \mfg \rightarrow \bR,
\quad
(Y_1, Y_2) \mapsto 
\langle Y_1, Y_2 \rangle = \tr(Y_1 Y_2)
\label{bilinear_form}
\ee
provides a symmetric $\Ad$-invariant non-degenerate 
bilinear form. 

Let us remember that the fixed-point set of the 
Cartan involution $\Theta (y) = (y^*)^{-1}$ 
$(y \in G)$ naturally selects a maximal compact 
subgroup
\be
K = \{ y \in G 
\, | \, 
\Theta (y) = y \}
= \{ y \in G 
\, | \, 
y \mbox{ is unitary} \}
\cong U(n) \times U(n)
\label{K}
\ee
of the Lie group $G$. Also, the Lie algebra 
involution $\theta (Y) = - Y^*$ $(Y \in \mfg)$ 
corresponding to $\Theta$ naturally induces the 
Cartan decomposition
\be
\mfg = \mfk \oplus \mfp
\label{gradation}
\ee
with the Lie subalgebra and the complementary
subspace
\be
\mfk = \ker(\theta - \Id) 
= \{ Y \in \mfg \, | \, Y^* = - Y \}
\quad \mbox{and} \quad
\mfp = \ker(\theta + \Id) 
= \{ Y \in \mfg \, | \, Y^* = Y \},
\label{mfkp}
\ee
respectively. Due to the Cartan decomposition 
(\ref{gradation}), each $Y \in \mfg$ can be 
decomposed uniquely as
\be
Y = Y_+ + Y_-
\qquad
(Y_+ \in \mfk, \, Y_- \in \mfp).
\label{decomp}
\ee

Next, notice that the set of diagonal matrices
\be
\mfa = \{ 
Q = \diag(q_1, \ldots, q_n, -q_1, \ldots, -q_n)
\in \mfp 
\, | \,
(q_1, \ldots, q_n) \in \bR^n \}
\label{mfa}
\ee
forms a maximal Abelian subspace in $\mfp$ 
(\ref{mfkp}). Let $\mfa^\perp$ denote the 
subspace of the off-diagonal elements of $\mfp$; 
then we have the orthogonal decomposition 
$\mfp = \mfa \oplus \mfa^\perp$. Let us also
consider the centralizer of $\mfa$ inside $K$,
which is the Abelian subgroup
\be
M = Z_K (\mfa)
= \{
\diag( e^{\ri \chi_1}, \ldots, e^{\ri \chi_n},
e^{\ri \chi_1}, \ldots, e^{\ri \chi_n} ) \in K
\, | \,
(\chi_1, \ldots, \chi_n) \in \bR^n \}.
\label{M}
\ee
Obviously its Lie algebra has the form
\be
\mfm = \{
\diag(\ri \chi_1, \ldots, \ri \chi_n, 
\ri \chi_1, \ldots, \ri \chi_n)
\in \mfk 
\, | \,
(\chi_1, \ldots, \chi_n) \in \bR^n \}.
\label{mfm} 
\ee
If $\mfm^\perp$ denotes the subspace of the
off-diagonal elements of $\mfk$, then we can
write $\mfk = \mfm \oplus \mfm^\perp$. Finally,
recalling the Cartan decomposition 
(\ref{gradation}), we end up with the refined
decomposition
\be
\mfg 
= \mfm \oplus \mfm^\perp 
\oplus \mfa \oplus \mfa^\perp.
\label{refined_decomp}
\ee

Having equipped with the above group theoretic
objects, in the rest of the section we review
some basic notions from symplectic geometry.
Recall that the cotangent bundle $T^* G$ of the 
Lie group $G$ can be trivialized, say, by 
left translations. Upon identifying the dual 
space $\mfg^*$ with the Lie algebra $\mfg$ via 
the bilinear form (\ref{bilinear_form}), it is 
clear that the product manifold
\be
\cP = G \times \mfg
= \{ (y, Y) 
\, | \, 
y \in G, \, Y \in \mfg \}
\label{cP}
\ee
provides a convenient model for $T^* G$. 
Furthermore, the tangent spaces of $\cP$ can be 
naturally identified as
\be
T_{(y, Y)} \cP = T_{(y, Y)}(G \times \mfg) 
\cong T_y G \oplus T_Y \mfg 
\cong T_y G \oplus \mfg
\qquad
((y, Y) \in \cP).
\label{tangent_space_identification}
\ee
Let us observe that on the model space 
$\cP \cong T^* G$ the canonical one-form 
$\vartheta \in \Omega^1(\cP)$ reads
\be
\vartheta_{(y, Y)} (\delta y \oplus \delta Y) 
= \langle y^{-1} \delta y, Y \rangle
\qquad
((y, Y) \in \cP, \, 
\delta y \oplus \delta Y \in T_y G \oplus \mfg ),
\label{canonical_theta}
\ee
whereas for the canonical symplectic form we use 
the convention 
$\omega = -\ddd \vartheta \in \Omega^2(\cP)$.
 
Now note that the smooth left action of the 
product Lie group $K \times K$ on the group 
manifold $G$ defined by the formula
\be
(k_L, k_R) \acts y = k_L y k_R^{-1}
\qquad
(y \in G, \, (k_L, k_R) \in K \times K)
\label{KK_action_on_G}
\ee
naturally lifts onto $T^* G$. 
Working with the model space $\cP$ (\ref{cP})
of the cotangent bundle, 
the lift of the above $K \times K$-action 
(\ref{KK_action_on_G}) takes the form
\be
(k_L, k_R) \acts (y, Y) 
= (k_L y k_R^{-1}, k_R Y k_R^{-1})
\qquad
((y, Y) \in \cP, 
\, (k_L, k_R) \in K \times K).
\label{KKonP}
\ee
This action is clearly symplectic, admitting the 
$K \times K$-equivariant momentum map
\be
J \colon \cP 
\rightarrow 
(\mfk \oplus \mfk)^* \cong \mfk \oplus \mfk,
\quad (y, Y) \mapsto J(y, Y) 
= (y Y y^{-1})_+ \oplus (-Y_+).
\label{J}
\ee
Without any further notice, in the rest of the 
paper we shall frequently use the natural dual 
space identification
$(\mfk \oplus \mfk)^* \cong \mfk \oplus \mfk$
induced by the bilinear form 
(\ref{bilinear_form}).

To proceed further, with each column vector 
$V \in \bC^N$ subject to the conditions 
$V^* V = N$ and $C V + V = 0$ we associate 
the Lie algebra element
\be
\xi(V) 
= \ri \mu (V V^* - \bsone_N) + \ri (\mu - \nu) C 
\in \mfk,
\label{xi}
\ee
where $\mu$, $\nu$ are arbitrary real parameters
satisfying $\mu \neq 0 \neq \nu$. Also, let 
$E \in \bC^N$ denote the distinguished column 
vector with components
\be
E_a = - E_{n + a} = 1 \qquad 
(a \in \bN_n = \{1, \ldots, n \}),
\label{E}
\ee
and consider the Lie algebra element
\be
J_0 = (-\xi(E)) \oplus \ri \kappa C
\in \mfk \oplus \mfk,
\label{J_0}
\ee
where $\kappa$ is an arbitrary real parameter.
In order to derive the hyperbolic $BC_n$ Sutherland 
and the rational $BC_n$ RSvD models from symplectic
reduction, we wish to 
reduce the symplectic manifold $(\cP, \omega)$ 
at the very special value $J_0$ (\ref{J_0}) of 
the momentum map $J$ (\ref{J}). We mention in 
passing that the parametrizations of the Lie 
algebra elements $\xi(V)$ (\ref{xi}) and $J_0$ 
(\ref{J_0}) turn out to be very natural in the 
sense that, after performing the reduction, the 
parameter triple $(\mu, \nu, \kappa)$ can be 
identified with the coupling parameters of the 
rational $BC_n$ RSvD model (\ref{H_R}).

Our experience with the CMS and the RSvD models 
convinces us that, in general, the application 
of the shifting trick leads to a shorter and 
neater derivation of these particle systems from
symplectic reduction. As the initial step
of the shifting trick (see e.g. \cite{OR}),
we have to identify the adjoint orbit passing 
through $-J_0$ (\ref{J_0}). Since $C$ commutes 
with each element of $K$, for the adjoint 
orbit in question we have the natural 
identification 
$\cO \oplus \{- \ri \kappa C \} \cong \cO$, where
\be
\cO = \cO(\xi(E)) 
= \{ \xi(V) \in \mfk \, | \, V \in \bC^N, 
\,
V^* V = N, 
\, C V + V = 0 \}.
\label{cO}
\ee
Following the prescription of the shifting trick,
we also introduce the extended phase space
\be
\cP^\ext = \cP \times \cO
= \{ (y, Y, \rho)
\, | \,
y \in G, \, Y \in \mfg \, ,\rho \in \cO \},
\label{cP_ext}
\ee
and endow it with the product symplectic structure
\be
\omega^\ext = \omega + \omega^\cO,
\label{omega_ext}
\ee 
where, of course, $\omega^\cO$ is the standard 
Kirillov--Kostant--Souriau symplectic form carried 
by the orbit $\cO$ (\ref{cO}). The natural
extension of the $K \times K$-action (\ref{KKonP}) 
onto $\cP^\ext$ is given by the diagonal action
\be
(k_L, k_R) \acts (y, Y, \rho) 
= (k_L y k_R^{-1}, k_R Y k_R^{-1}, 
k_L \rho k_L^{-1}),
\label{KKonPext}
\ee
and the corresponding $K \times K$-equivariant 
momentum map takes the form
\be
J^\ext \colon 
\cP^\ext \rightarrow \mfk \oplus \mfk,
\quad
(y, Y, \rho)
\mapsto
J^\ext(y, Y, \rho) 
= ( (y Y y^{-1})_+ + \rho ) 
\oplus (- Y_+ -\ri \kappa C).
\label{J_ext}
\ee
As a matter of fact, it is clear that $J^\ext$
takes its values in the subalgebra
\be
\mfs (\mfk \oplus \mfk)
= \{ X_L \oplus X_R \in \mfk \oplus \mfk
\, | \,
\tr(X_L) + \tr(X_R) = 0 \} 
\leq \mfk \oplus \mfk.
\label{mfs_mfk_mfk}
\ee
Now, the shifting trick guarantees that 
\be
\cP /\!/_{J_0} (K \times K)
\cong
\cP^\ext /\!/_{0} (K \times K),
\ee
i.e. for our purposes it is an equally 
valid approach to perform the Marsden--Weinstein
reduction of the symplectic manifold 
$(\cP^\ext, \omega^\ext)$ 
at the zero value of the momentum map $J^\ext$ 
(\ref{J_ext}). 

\section{The hyperbolic $BC_n$ Sutherland model}
\label{S3}
\setcounter{equation}{0}
In this section we review the Lax matrix and the
symplectic reduction understanding of the hyperbolic 
$BC_n$ Sutherland model with three independent 
coupling constants. Our discussion on the reduction 
aspects of the model is mainly based on 
the ideas presented in \cite{FeherPusztai2007}, 
adapted to the conventions of \cite{Pusztai_NPB2011}.  

\subsection{The Lax matrix of the Sutherland model}
The main goal of this subsection is to introduce 
the Lax matrix of the Sutherland model, and to 
analyze some of its spectral properties that prove 
to be pertinent in the symplectic geometric 
understanding of the $BC_n$ RSvD model. As a 
preparatory step, with each point $(q, p) \in \cP^S$ 
we associate the diagonal matrices
\be
Q = \diag(q_1, \ldots, q_n, -q_1, \ldots, -q_n) 
\in \mfa
\quad \mbox{and} \quad
P = \diag(p_1, \ldots, p_n, -p_1, \ldots, -p_n)
\in \mfa. 
\ee
Let $\tilde{\ad}_Q$ denote the restriction of the 
linear operator $\ad_Q = [Q, \cdot] \in \mfgl(\mfg)$ 
onto the off-diagonal part of the Lie algebra $\mfg$
(\ref{mfg}). Notice that the regularity condition 
$q \in \mfc$ ensures the invertibility of the linear 
operator $\tilde{\ad}_Q$. Therefore, making use of 
the standard functional calculus, the matrix 
\be 
L_\mfp(q, p) = P - \sinh(\tilde{\ad}_Q)^{-1} \xi(E)
+ \coth(\tilde{\ad}_Q)(\ri \kappa C) \in \mfp 
\label{L_mfp}
\ee
is well-defined. Let us note that the above 
introduced $N \times N$ matrix 
$L_\mfp = L_\mfp(q, p)$ is Hermitian with 
block-matrix structure
\be
L_\mfp = \begin{bmatrix}
A & B \\
-B & -A
\end{bmatrix},
\label{L_mfp_blocks}
\ee
where $A$ and $B$ are $n \times n$ matrices 
satisfying $A^* = A$ and $B^* = -B$. More
concretely, for their matrix entries we have
\be
A_{a, b} = \frac{-\ri \mu}{\sinh(q_a - q_b)}, 
\quad
A_{c, c} = p_c, 
\quad
B_{a, b} = \frac{\ri \mu}{\sinh(q_a + q_b)}, 
\quad
B_{c, c} 
= \frac{\ri \nu}{\sinh(2 q_c)} 
+ \ri \kappa \coth(2 q_c), 
\label{A&B_entries} 
\ee
where $a, b, c \in \bN_n$ and $a \neq b$. Now, 
with the parameter triple $(\mu, \nu, \kappa)$
we associate the map 
\be
L \colon \cP^S \rightarrow \mfg,
\quad
(q, p) \mapsto L(q, p)
= L_\mfp(q, p) - \ri \kappa C.
\label{L}
\ee
As one can see in \cite{FeherPusztai2007}, the 
above map $L$ provides a Lax matrix for the 
hyperbolic $BC_n$ Sutherland model. The exact 
relationship between the parameters 
$(\mu, \nu, \kappa)$ and the Sutherland
coupling parameters $(g, g_1, g_2)$ appearing
in (\ref{H_S}) will be clarified later (see 
(\ref{S_coupling_parameters_from_RSvD})
and (\ref{RSvD_coupling_parameters_from_S})).

Having defined the Lax matrix of the $BC_n$
Sutherland model, we now turn our
attention to its spectral properties. Remembering
that the only difference between $L$ and
$L_\mfp$ is the anti-Hermitian constant term 
$\ri \kappa C$, it is clear that the 
spectral properties of the \emph{non-Hermitian} 
Lax matrix $L$ (\ref{L}) can be understood by 
analyzing the spectrum of the Hermitian 
matrix $L_\mfp$ (\ref{L_mfp}). Since $L_\mfp$
belongs to the complementary subspace $\mfp$
(\ref{mfkp}), we know from general principles 
that it can be conjugated into the maximal Abelian 
subspace $\mfa$ (\ref{mfa}) by some element of the 
maximal compact subgroup $K$ (\ref{K}). However, 
this diagonalization procedure becomes much more 
explicit by exploiting the 
\emph{singular value decomposition} 
of the sum of the matrices $A$ and $B$ introduced 
in the block-matrix decomposition 
(\ref{L_mfp_blocks}). More precisely, we can write
\be
A + B = u \bss v^*,
\label{SVD}
\ee
where $u$ and $v$ are $n \times n$ 
unitary matrices, meanwhile 
$\bss = \diag(s_1, \ldots, s_n)$
is a diagonal matrix filled in with the singular
values $s_1 \geq \ldots \geq s_n \geq 0$ of the 
matrix $A + B$. Now, upon defining the $N \times N$ 
block-matrices
\be
k = \half \begin{bmatrix}
v + u & v - u \\
v - u & v + u
\end{bmatrix}
\quad \mbox{and} \quad
S = \begin{bmatrix}
\bss & 0 \\
0 & -\bss
\end{bmatrix},
\ee
one can easily verify that $k \in K$, $S \in \mfa$ 
and $L_\mfp = k S k^{-1}$. Having diagonalized the
matrix $L_\mfp$ (\ref{L_mfp}), from the definition 
(\ref{L}) we see at once that
\be
L^2 = k (S^2 - \kappa^2 \bsone_N) k^{-1},
\label{L_squared}
\ee
therefore the spectrum of the Hermitian matrix
$L^2$ can be identified as
\be
\sigma(L^2) 
= \sigma(S^2 - \kappa^2 \bsone_N)
= \sigma(\bss^2 - \kappa^2 \bsone_n).
\label{L_squared_spectrum_1}
\ee 
On the other hand, remembering the singular value 
decomposition (\ref{SVD}), we can also write
\be
(A + B) (A - B) 
= (A + B) (A + B)^*
= u \bss v^* v \bss u^* 
= u \bss^2 u^{-1},
\ee
from where we obtain the spectral identification
\be
\sigma(\bss^2) = \sigma(A^2 - B^2 - [A, B]).
\label{bss&AB}
\ee
Now, the comparison of the equations 
(\ref{L_squared_spectrum_1}) and
(\ref{bss&AB}) immediately leads to
the formula
\be
\sigma(L^2) 
= \sigma(A^2 - B^2 - [A, B] - \kappa^2 \bsone_n).
\label{L_squared_spectrum_OK}
\ee
Since $L^2$ is an Hermitian matrix, the spectral
mapping theorem guarantees that each 
eigenvalue of $L$ is either a real number or 
a purely imaginary number. However, under certain
technical assumptions, the relationship
(\ref{L_squared_spectrum_OK}) permits us to
provide a more accurate description for the
spectrum of $L$.

\begin{LEMMA} 
\label{L1}
Suppose that $\nu \neq 2 \mu$ and 
$\nu \kappa \geq 0$; then for each point 
$(q, p) \in \cP^S$ we have $L(q, p)^2 > 0$, 
i.e. the matrix $L(q, p)^2$ is positive definite. 
In particular, the eigenvalues of the Lax matrix 
$L(q, p)$ are non-zero real numbers.
\end{LEMMA}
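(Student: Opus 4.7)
Combining (\ref{L_squared_spectrum_1}) with the singular value decomposition (\ref{SVD}) of $A+B$, the Hermitian matrix $L(q,p)^2$ has spectrum $\{s_c^2 - \kappa^2 : c=1,\ldots,n\}$ (each value with multiplicity two), where $s_1\geq\ldots\geq s_n\geq 0$ are the singular values of $A+B$. Hence $L(q,p)^2>0$ is equivalent to the inequality $s_n>|\kappa|$, or equivalently, to the strict positive-definiteness of the $n\times n$ Hermitian matrix
\[
M(q,p) := (A+B)^*(A+B) - \kappa^2\bsone_n.
\]
Once this is shown, the remaining statement is immediate: the eigenvalues of $L(q,p)$ are square roots of those of $L(q,p)^2$, hence non-zero real numbers.

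To prove $M>0$ I would compute it directly from (\ref{A&B_entries}). The identities $\sinh(q_a\pm q_b)=\sinh q_a\cosh q_b\pm\cosh q_a\sinh q_b$ bring the off-diagonal entries of $A+B$ into the Cauchy-type form
\[
(A+B)_{ab}=\frac{-2\ri\mu\cosh(q_a)\sinh(q_b)}{\sinh^2(q_a)-\sinh^2(q_b)} \qquad (a\ne b),
\]
while $(A+B)_{cc}=p_c+\ri\beta_c$ with $\beta_c=[(\nu+\kappa)+2\kappa\sinh^2(q_c)]/\sinh(2q_c)$. A short calculation then yields the key diagonal identity
\[
\beta_c^2-\kappa^2 = \frac{(\nu+\kappa)^2}{\sinh^2(2q_c)}+\frac{\nu\kappa}{\cosh^2(q_c)}>0,
\]
the strict positivity of which follows because, under $\nu\kappa\geq 0$, both summands are non-negative and cannot vanish simultaneously: $(\nu+\kappa)^2 = 0 = \nu\kappa$ would force $\nu = 0$, contradicting the standing assumption $\nu\neq 0$. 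Writing $A+B=D_p+\ri\mathcal{R}$ with $D_p=\diag(p_c)$ and $\mathcal{R}=D_\beta+\mathcal{C}$ real (where $D_\beta=\diag(\beta_c)$ and $\mathcal{C}$ is the off-diagonal Cauchy-type matrix above), I would expand $M$ and regroup the resulting terms to realize it as a sum of positive-(semi)definite pieces: a strictly positive diagonal from $D_p^2+(D_\beta^2-\kappa^2\bsone_n)$, the Gram matrix $\mathcal{C}^T\mathcal{C}\geq 0$ from the Cauchy-type part, and the cross-terms $D_\beta\mathcal{C}+\mathcal{C}^T D_\beta$ and $\ri(D_p\mathcal{R}-\mathcal{R}^T D_p)$ that must be absorbed by completing-the-square arguments.

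\textbf{Main obstacle.} The delicate step is the matrix-level rearrangement of $M$ into a manifestly positive combination, while faithfully tracking the three hypotheses: $\mu\neq 0$ (making the Cauchy-type block non-trivial), $\nu\kappa\geq 0$ (ensuring the above diagonal identity stays non-negative), and $\nu\neq 2\mu$ (which I expect is what prevents a would-be null vector of the Gram piece from aligning with a zero of the diagonal, thereby upgrading semi-definiteness to \emph{strict} positivity of $M$). A conceptually cleaner alternative, available once the dual reduction picture of the later sections is in place, is to identify $s_c$ with the coordinate $\lambda_c$ of the RSvD model (\ref{H_R}); reality of that Hamiltonian under $\nu\kappa\geq 0$ then directly forces $\lambda_c^2>\kappa^2$.
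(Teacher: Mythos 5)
Your reduction of the lemma to the strict positivity of the $n \times n$ Hermitian matrix $M(q,p) = (A+B)^*(A+B) - \kappa^2 \bsone_n$ is correct, and your diagonal identity $\beta_c^2 - \kappa^2 = (\nu+\kappa)^2/\sinh^2(2q_c) + \nu\kappa/\cosh^2(q_c) > 0$ is a true and genuinely relevant computation. But the proposal stops exactly where the content of the lemma lies: you never show that the indefinite cross terms $D_\beta \mathcal{C} + \mathcal{C}^T D_\beta + \ri (D_p \mathcal{R} - \mathcal{R}^T D_p)$ can be absorbed; you only assert that they ``must be absorbed by completing-the-square arguments.'' This is a genuine gap, not a routine verification, and there is structural evidence that no naive regrouping can close it: each of your manifestly nonnegative pieces ($D_p^2$, $D_\beta^2 - \kappa^2 \bsone_n$, $\mathcal{C}^T\mathcal{C}$) is nonnegative without any use of the hypothesis $\nu \neq 2\mu$, so if your strategy succeeded as described, the lemma would follow with that hypothesis dropped --- which is not expected: for $\kappa = 0$ the invertibility of the ($C_n$-type) Lax matrix is known only under $\nu \neq 2\mu$ (Lemma 1 of \cite{Pusztai_JPA2011}). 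Hence the entire burden of that hypothesis must fall on the cross terms, and you offer no mechanism for it. Your ``conceptually cleaner alternative'' is also inadmissible: the identification of the singular values $s_c$ with the RSvD coordinates $\lambda_c$ rests on the dual reduction picture, whose key step (the proof of Lemma \ref{L4}) explicitly invokes Lemma \ref{L1} to conclude $Y^2 > 0$; using it here would be circular.

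The missing idea, as the paper implements it, is a completion of the square performed at the level of the Lax matrix itself rather than of $M$, together with a case split in $\kappa$. For $\kappa \neq 0$ one introduces the auxiliary $BC_n$ Lax matrix $\check{L}$ with shifted parameters $(\mu, \nu - \kappa, 0)$: it is Hermitian, so $\check{L}^2 \geq 0$ comes for free, and the spectral identification (\ref{L_squared_spectrum_OK}) applied to $\check{L}$ gives $\check{A}^2 - \check{B}^2 - [\check{A}, \check{B}] \geq 0$. Since $A = \check{A}$ and $B = \check{B} + \ri \kappa \coth(\bsq)$, a short computation using the identity (\ref{func_identity}) collapses all cross terms into rank-one plus diagonal pieces and yields the exact identity $A^2 - B^2 - [A,B] - \kappa^2 \bsone_n = \check{A}^2 - \check{B}^2 - [\check{A}, \check{B}] + \nu \kappa \sinh(\bsq)^{-2}$; because $\kappa \neq 0$, $\nu \neq 0$ and $\nu\kappa \geq 0$ force $\nu\kappa > 0$, the right-hand side is positive definite, and the spectral identification finishes the argument. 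The case $\kappa = 0$ is settled by citing the $C_n$ result of \cite{Pusztai_JPA2011}, which is the only place where $\nu \neq 2\mu$ is used. Your entrywise formulas are compatible with this route, but without the shift $\nu \to \nu - \kappa$ (or the citation for $\kappa = 0$) your outline does not constitute a proof.
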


\begin{proof}
Take an arbitrary point $(q, p) \in \cP^S$
and keep it fixed. First, notice that if 
$\kappa = 0$, then the Lax matrix $L = L(q, p)$ 
is of type $C_n$. However, we have a
fairly complete knowledge on the spectrum
of the Lax matrix of the $C_n$ Sutherland model. 
Namely, since $\nu \neq 2 \mu$, from Lemma 1 
in \cite{Pusztai_JPA2011} we see that the 
Hermitian matrix $L$ is invertible, whence 
$L^2 > 0$ is immediate.

In the following we assume that $\kappa \neq 0$. 
As an important auxiliary object in our proof, 
let us consider the $BC_n$-type Lax matrix 
$\check{L} = \check{L}(q, p)$ associated with 
the parameters $(\mu, \nu - \kappa, 0)$.
Recalling (\ref{L}) and (\ref{L_mfp_blocks}),
we see that $\check{L}$ is an Hermitian matrix
with block-matrix decomposition
\be
\check{L} = \check{L}_\mfp
= \begin{bmatrix}
\check{A} & \check{B} \\
-\check{B} & -\check{A}
\end{bmatrix} \in \mfp.
\label{L_check}
\ee
Moreover, remembering (\ref{A&B_entries}), for 
the matrix entries of $\check{A}$ and $\check{B}$ 
we have 
\be
\check{A}_{a, b} 
= \frac{-\ri \mu}{\sinh(q_a - q_b)}, 
\quad
\check{A}_{c, c} = p_c, 
\quad
\check{B}_{a, b} 
= \frac{\ri \mu}{\sinh(q_a + q_b)}, 
\quad
\check{B}_{c, c} 
= \frac{\ri (\nu - \kappa)}{\sinh(2 q_c)},  
\label{Acheck&Bcheck_entries} 
\ee
where $a, b, c \in \bN_n$ and $a \neq b$.
Since $\check{L}$ is Hermitian, it is clear 
that $\check{L}^2 \geq 0$. Thus the direct 
application of (\ref{L_squared_spectrum_OK}) 
on the Lax matrix $\check{L}$ yields immediately 
that 
\be
\check{A}^2 - \check{B}^2 
- [\check{A}, \check{B}] 
\geq 0.
\label{checked_positivity}
\ee
In order to find the connection between the Lax 
matrix $L$ (\ref{L}) and the auxiliary Lax matrix 
$\check{L}$ (\ref{L_check}), we introduce the 
diagonal matrix $\bsq = \diag(q_1, \ldots, q_n)$.  
Comparing the equations (\ref{A&B_entries}) and 
(\ref{Acheck&Bcheck_entries}), it is obvious that 
\be
A = \check{A}
\quad \mbox{and} \quad 
B = \check{B} + \ri \kappa \coth(\bsq).
\ee
Thus it is immediate that
\begin{multline}
A^2 - B^2 - [A, B] - \kappa^2 \bsone_N
\\
= \check{A}^2 - \check{B}^2 
- [\check{A}, \check{B}]
+ \ri \kappa [\coth(\bsq), \check{A}]
- \ri \kappa (\coth(\bsq) \check{B}
+ \check{B} \coth(\bsq))
+ \kappa^2 \sinh(\bsq)^{-2}.
\label{lemma1_key}
\end{multline}
Upon introducing the
column vector $\bsV \in \bC^n$ with components
$\bsV_c = 1 / \sinh(q_c)$ $(c \in \bN_n)$,
the right hand side of the above equation
can be simplified considerably. Indeed, 
by applying the standard hyperbolic identity
\be
\frac{\coth(x) + \coth(y)}{\sinh(x + y)}
= \frac{1}{\sinh(x) \sinh(y)},
\label{func_identity}
\ee 
one can easily verify the relations
\begin{align}
& \ri \kappa ( \coth(\bsq) \check{A}
- \check{A} \coth(\bsq)) 
= - \mu \kappa \bsV \bsV^* 
+ \mu \kappa \sinh(\bsq)^{-2}, \\
& \ri \kappa ( \coth(\bsq) \check{B}
+ \check{B} \coth(\bsq))
= - \mu \kappa \bsV \bsV^*
+ (\mu \kappa - \nu \kappa + \kappa^2)
\sinh(\bsq)^{-2}.
\end{align}
Plugging these formulae into (\ref{lemma1_key}),
we end up with the concise expression
\be
A^2 - B^2 - [A, B] - \kappa^2 \bsone_N
= \check{A}^2 - \check{B}^2
- [\check{A}, \check{B}]
+ \nu \kappa \sinh(\bsq)^{-2}.
\ee
However, due to (\ref{checked_positivity}) 
and the assumption $\nu \kappa \geq 0$,
the matrix on the right hand side of the 
above equation is manifestly positive definite. 
Therefore, according to the spectral
identification (\ref{L_squared_spectrum_OK}), 
we conclude that $L^2 > 0$.
\end{proof}

\subsection{The phase space of the 
Sutherland model}
In this subsection we perform the Marsden--Weinstein 
reduction of the extended symplectic manifold 
$(\cP^\ext, \omega^\ext)$ at the zero value of the 
momentum map $J^\ext$ (\ref{J_ext}). As a first step 
of the reduction, we have to solve the constraint
\be
J^\ext(y, Y, \rho) = 0
\label{constraint}
\ee
for $(y, Y, \rho) \in \cP^\ext$. In other words, we
have to understand the differential geometric 
properties of the closed level set
\be
\mfL_0 = (J^\ext)^{-1}(\{ 0 \})
= \{ (y, Y, \rho) \in \cP^\ext 
\, | \,
J^\ext(y, Y, \rho) = 0 \} 
\subset \cP^\ext.
\label{mfL_0}
\ee
In order to derive the phase space of the Sutherland 
model from the proposed reduction picture, we are 
looking for a special parametrization of $\mfL_0$ 
induced by the \emph{$KAK$ decomposition} of the 
group elements $y \in G$. (For background 
information on the $KAK$ decomposition see e.g. 
the book \cite{Knapp}.) As it can be seen from the 
lemma below, besides the $KAK$ decomposition, the 
most important ingredient of the parametrization 
is the Lax operator (\ref{L}).

\begin{LEMMA}
\label{L2}
Suppose that $\nu + \kappa \neq 0$; then for each 
point $(y, Y, \rho) \in \mfL_0$ of the level set 
there are some $q \in \mfc$, $p \in \bR^n$ and 
$\eta_L, \eta_R \in K$, such that
\be
y = \eta_L e^Q \eta_R^{-1},
\quad
Y = \eta_R L(q, p) \eta_R^{-1},
\quad
\rho = \eta_L \xi(E) \eta_L^{-1}.
\ee
\end{LEMMA}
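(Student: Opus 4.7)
The plan is to extract $(q, p, \eta_L, \eta_R)$ from $(y, Y, \rho)$ in three stages: a $KAK$ decomposition of $y$, a reduction of each of the two constraints contained in $J^\ext(y, Y, \rho) = 0$, and finally a gauge fixing using the residual $M$-freedom. First, write $y = k_L e^{\tilde Q} k_R^{-1}$ with $k_L, k_R \in K$ and $\tilde Q \in \bar{\mfa}^+$ (the closed positive Weyl chamber $\tilde q_1 \geq \ldots \geq \tilde q_n \geq 0$). The second component $Y_+ = -\ri\kappa C$ of the momentum map equation, together with the fact that $C$ commutes with every element of $K$ (immediate from $y^*Cy = C$ and $y^*y = \bsone_N$ for $y \in K$), then implies that $\tilde Y := k_R^{-1}Y k_R = \tilde Y_- - \ri\kappa C$ with $\tilde Y_- \in \mfp$.

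Using the identity $e^{\ad_{\tilde Q}}(X)_+ = \cosh(\ad_{\tilde Q})(X_+) + \sinh(\ad_{\tilde Q})(X_-)$ (valid because $\ad_{\tilde Q}$ swaps $\mfk$ and $\mfp$), the first component $(yYy^{-1})_+ + \rho = 0$ rewrites as
\[
k_L^{-1}\rho\, k_L = \ri\kappa\cosh(\ad_{\tilde Q})(C) - \sinh(\ad_{\tilde Q})(\tilde Y_-).
\]
Since $\cO$ is $K$-invariant, the left-hand side equals $\xi(V_0)$ for some admissible $V_0 \in \bC^N$. Splitting $\tilde Y_- = P + Y_-^{\text{off}}$ with $P \in \mfa$, $Y_-^{\text{off}} \in \mfa^\perp$, and observing that $\sinh(\ad_{\tilde Q})(P) = 0$ while both $\cosh(\ad_{\tilde Q})(C)$ and $\sinh(\ad_{\tilde Q})(Y_-^{\text{off}})$ are off-diagonal, the $\mfm$-part of the equation forces $\ri\mu(|V_{0,a}|^2 - 1) = 0$ for every $a$; combined with $CV_0 = -V_0$, this gives $V_{0, a} = e^{\ri\phi_a}$ and $V_{0, a+n} = -e^{\ri\phi_a}$ for some phases $\phi_a \in \bR$.

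The main obstacle is establishing that $\tilde q$ actually lies in the open Weyl chamber $\mfc$, since only then is $\sinh(\wad_{\tilde Q})$ invertible. Because $\tilde Q$ is diagonal, $\wad_{\tilde Q}$ acts entry-wise by the scalar $\tilde Q_{ii} - \tilde Q_{jj}$, so the off-diagonal equation
\[
\sinh(\wad_{\tilde Q})(Y_-^{\text{off}}) = \ri\kappa\cosh(\wad_{\tilde Q})(C) - \xi(V_0)
\]
forces its right-hand side to vanish at every entry $(i, j)$ with $\tilde Q_{ii} = \tilde Q_{jj}$. If $\tilde q_c = 0$ for some $c$, inspecting the $(c, n+c)$-entry (using $V_{0, n+c} = -V_{0, c}$) produces $\ri\kappa \cdot 1 - (-\ri\nu) = \ri(\nu + \kappa) = 0$; if $\tilde q_a = \tilde q_b$ for distinct $a, b \in \bN_n$, the $(a, b)$-entry produces $-\ri\mu\, e^{\ri(\phi_a - \phi_b)} = 0$. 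The hypotheses $\mu \neq 0$ and $\nu + \kappa \neq 0$ exclude both possibilities, so $\tilde Q$ is regular and $\tilde q \in \mfc$.

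Once regularity is in hand, applying $\sinh(\wad_{\tilde Q})^{-1}$ to the off-diagonal equation yields $Y_-^{\text{off}} = -\sinh(\wad_{\tilde Q})^{-1}\xi(V_0) + \coth(\wad_{\tilde Q})(\ri\kappa C)$. The residual $M$-freedom in the $KAK$ decomposition, $(k_L, k_R) \mapsto (k_L m, k_R m)$ with $m = \diag(e^{\ri\phi_1}, \ldots, e^{\ri\phi_n}, e^{\ri\phi_1}, \ldots, e^{\ri\phi_n}) \in M$, leaves $\tilde Q$ (and hence $P$) invariant while sending $V_0 \mapsto m^{-1}V_0 = E$. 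Setting $\eta_L = k_L m$, $\eta_R = k_R m$, $q = \tilde q$, and reading off $p$ from $P = \diag(p_1, \ldots, p_n, -p_1, \ldots, -p_n)$, one obtains $\tilde Y_- = L_\mfp(q, p)$ and hence $Y = \eta_R L(q, p)\eta_R^{-1}$, $\rho = \eta_L \xi(E)\eta_L^{-1}$, and $y = \eta_L e^Q \eta_R^{-1}$, as required.
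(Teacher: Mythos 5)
Your proof is correct and follows essentially the same route as the paper's: a $KAK$ decomposition of $y$, component-wise analysis of the momentum map constraint to extract $|V_{0,a}|=1$ and to derive regularity $\tilde q \in \mfc$ from $\mu \neq 0$ and $\nu + \kappa \neq 0$ (your $(c,n+c)$- and $(a,b)$-entry computations are exactly the paper's equations), and finally the residual $M$-gauge freedom to rotate $V_0$ to $E$ and read off $\tilde Y_- = L_\mfp(q,p)$. The only differences are presentational — you organize the constraint via the $\mfm \oplus \mfm^\perp \oplus \mfa \oplus \mfa^\perp$ splitting and the identity $(e^{\ad_{\tilde Q}}X)_+ = \cosh(\ad_{\tilde Q})(X_+) + \sinh(\ad_{\tilde Q})(X_-)$, where the paper works entry-wise throughout, and you re-gauge $(k_L,k_R)\mapsto(k_Lm,k_Rm)$ where the paper equivalently applies $\Ad_{m^{-1}}$ to the constraint equation.
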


\begin{proof}
Take an arbitrary point $(y, Y, \rho) \in \mfL_0$.
The $KAK$ decomposition tells us precisely that the
Lie group element $y \in G$ can be decomposed as 
\be
y = k_L e^Q k_R^{-1},
\ee
where $k_L, k_R \in K$ and 
\be
Q = \diag(q_1, \ldots, q_n, -q_1, \ldots, -q_n) 
\in \mfa
\ee 
with some $q_1 \geq \ldots \geq q_n \geq 0$. Also,
by (\ref{cO}),
we can write $\rho = \xi(V)$ with some column vector 
$V \in \bC^N$ satisfying $V^* V = N$ and 
$C V + V = 0$. 

Plugging the above parametrizations into the 
constraint $J^\ext(y, Y, \rho) = 0$,
the explicit form of the momentum map $J^\ext$
(\ref{J_ext}) immediately leads to the relationship 
$Y_+ = - \ri \kappa C$, together with
\be
0 = (y Y y^{-1})_+ + \rho 
= k_L \left(
\sinh(\ad_Q) (k_R^{-1} Y_- k_R)
- \cosh(\ad_Q)(\ri \kappa C) + \xi(k_L^{-1} V)
\right) k_L^{-1}.
\label{L2_constraint}
\ee
Upon introducing the shorthand notations
\be
\tilde{Y}_- = k_R^{-1} Y_- k_R \in \mfp
\quad \mbox{and} \quad 
\tilde{V} = k_L^{-1} V \in \bC^N, 
\ee
from the equation (\ref{L2_constraint}) 
it readily follows that
\be
\sinh(\ad_Q) \tilde{Y}_- 
= - \xi(\tilde{V}) 
+ \cosh(\ad_Q)(\ri \kappa C)
= - \xi(\tilde{V}) 
+ \ri \kappa \cosh(2 Q) C.
\label{tilde_Y}
\ee
Spelling out the components of the above matrix
equation, for all $k, l \in \bN_N$ we have
\be
\sinh(q_k - q_l) (\tilde{Y}_-)_{k, l}
= -\ri \mu (\tilde{V}_k \overline{\tilde{V}}_l 
- \delta_{k, l})
- \ri (\mu - \nu) C_{k, l} 
+ \ri \kappa \cosh(2 q_k) C_{k, l},
\label{Y_components}
\ee
where it is understood that $q_{n + c} = -q_c$
for all $c \in \bN_n$.

Now take an arbitrary $c \in \bN_n$. With the 
specialization $k = l = c$ the above equation 
(\ref{Y_components}) takes the form
\be
0 = - \ri \mu ( \vert \tilde{V}_c \vert^2 - 1),
\ee
whence it is obvious that 
$\tilde{V}_c = - \tilde{V}_{n + c} 
= e^{\ri \chi_c}$
with some real parameter $\chi_c \in \bR$. Notice 
also that with $k = c$ and $l = n + c$ the equation 
(\ref{Y_components}) translates into
\be
\sinh(2 q_c) (\tilde{Y}_-)_{c, n + c}
= \ri \nu + \ri \kappa \cosh(2 q_c).
\ee 
Therefore, under the assumption 
$\nu + \kappa \neq 0$, 
from the above relationship it is clear that 
$q_c \neq 0$.

Next, let $a, b \in \bN_n$ be arbitrary numbers 
satisfying $a \neq b$. With the specialization
$k = a$ and $l = b$ the equation 
(\ref{Y_components}) has the form
\be
\sinh(q_a - q_b) (\tilde{Y}_-)_{a, b}
= - \ri \mu \tilde{V}_a \overline{\tilde{V}}_b
\neq 0,
\ee
therefore $q_a \neq q_b$. Putting the above 
considerations together, we see that 
$q_1 > \ldots > q_n > 0$, whence the
\emph{regularity} property $q \in \mfc$ is 
immediate.

Now let us introduce the diagonal matrix
\be
m = \diag(e^{\ri \chi_1}, \ldots, e^{\ri \chi_n},
e^{\ri \chi_1}, \ldots, e^{\ri \chi_n}) \in M.
\label{L2_m}
\ee 
Due to the construction of $m$, it is clear 
that $m^{-1} \tilde{V} = E$ (\ref{E}).
Therefore, by applying the linear operator
$\Ad_{m^{-1}}$ on the equation (\ref{tilde_Y}), 
we obtain
\be
\sinh(\ad_Q) \Ad_{m^{-1}} (\tilde{Y}_-)
= - \xi(E) + \cosh(\ad_Q) (\ri \kappa C).
\ee
By inspecting the diagonal and the off-diagonal
parts of $\Ad_{m^{-1}} (\tilde{Y}_-)$ separately, 
and utilizing the regularity of $q$, we can write
\be
\Ad_{m^{-1}} (\tilde{Y}_-)
= P - \sinh(\tilde{\ad}_Q)^{-1} \xi(E)
+ \coth(\tilde{\ad}_Q)(\ri \kappa C)
\ee
with some diagonal matrix
$P = \diag(p_1, \ldots, p_n, -p_1, \ldots, -p_n) 
\in \mfa$. Thus, remembering the definition
(\ref{L_mfp}), we can simply write 
$\tilde{Y}_- = m L_\mfp(q, p) m^{-1}$; 
therefore the relationship 
\be
Y_- = k_R m L_\mfp(q, p) m^{-1} k_R^{-1}
\ee
is immediate. Since 
$V = k_L \tilde{V} = k_L m E$, we also have 
\be
\rho = \xi(V) = k_L m \xi(E) m^{-1} k_L^{-1}.
\ee
Therefore, with the group elements
$\eta_L = k_L m \in K$ and 
$\eta_R = k_R m \in K$, the lemma follows.
\end{proof}

To proceed further, let us notice that the 
Abelian group
\be
U(1)_* 
= \{ (e^{\ri \chi} \bsone_N, 
e^{\ri \chi} \bsone_N)
\in K \times K
\, | \, 
\chi \in \bR \}
\cong U(1)
\label{U(1)_*}
\ee
is a closed normal subgroup of the product Lie 
group $K \times K$, whence the coset space 
$(K \times K) / U(1)_*$ inherits a natural
(real) Lie group structure from $K \times K$.
Let us also consider the smooth product manifold
\be
\cM^S = \cP^S \times (K \times K) / U(1)_*.
\label{cM_S}
\ee
Having equipped with the above objects, now
we can introduce a natural parametrization of 
the level set $\mfL_0$ (\ref{mfL_0}) motivated 
by Lemma \ref{L2}. Indeed, by imitating the proof 
of Lemma 2 in \cite{Pusztai_NPB2011}, one can 
easily verify that the map
\be
\Upsilon^S \colon \cM^S \rightarrow \cP^\ext,
\quad
(q, p, (\eta_L, \eta_R) U(1)_*)
\mapsto
(\eta_L e^Q \eta_R^{-1},
\eta_R L(q, p) \eta_R^{-1},
\eta_L \xi(E) \eta_L^{-1})
\label{Upsilon_S}
\ee
is a well-defined injective immersion with 
image $\Upsilon^S(\cM^S) = \mfL_0$. 
Also, just as in the proof of Lemma 3 in 
\cite{Pusztai_NPB2011}, the parametrization
$\Upsilon^S$ makes it easy to verify directly
that the zero element of the Lie algebra
$\mfs(\mfk \oplus \mfk)$ (\ref{mfs_mfk_mfk}) 
is a regular value of the momentum map $J^\ext$
(\ref{J_ext}). Therefore the level set 
$\mfL_0$ is an \emph{embedded submanifold} of 
$\cP^\ext$ in a natural manner. More precisely, 
there is a unique smooth manifold structure on 
$\mfL_0$ such that the pair $(\mfL_0, \iota_0)$
with the tautological injection
\be
\iota_0 \colon \mfL_0 \hookrightarrow \cP^\ext
\ee
is an embedded submanifold of $\cP^\ext$. 
At this point let us notice that, due to
the relationship 
$\Upsilon^S(\cM^S)  = \iota_0 (\mfL_0)$,
the map $\Upsilon^S$ factors through 
$(\mfL_0, \iota_0)$; therefore there is a 
well-defined map
\be 
\Upsilon^S_0 \colon \cM^S \rightarrow \mfL_0,
\ee
such that 
$\Upsilon^S = \iota_0 \circ \Upsilon^S_0$.
Let us observe that, since $\iota_0$ is an 
embedding, the map $\Upsilon^S_0$ is automatically 
\emph{smooth} 
(see e.g. Theorem 1.32 in \cite{Warner}).
Now, since $\Upsilon^S_0$ is a smooth bijective
immersion from $\cM^S$ onto $\mfL_0$, and
since it acts between manifolds
of the same dimension, it is immediate that
$\Upsilon^S_0$ is a diffeomorphism. 
The above ideas can be summarized by saying
that the diagram
\be
\begin{split}
\xymatrix{
 & \cP^\ext & 
\\
\cM^S \ar[rr]_{\Upsilon^S_0}^{\cong}
\ar@{^{(}->}[ur]^{\Upsilon^S} 
& & \mfL_0 \ar@{_{(}->}[ul]_{\iota_0} 
}
\label{Upsilon_S_diagram}
\end{split}
\ee
is commutative.
In other words, the pair 
$(\cM^S, \Upsilon^S)$ provides an equivalent 
model for the smooth embedded submanifold 
$(\mfL_0, \iota_0)$.
 
Utilizing the model $(\cM^S, \Upsilon^S)$ of the
level set $\mfL_0$, in the following we complete
the symplectic reduction of 
$(\cP^\ext, \omega^\ext)$ 
at the zero value of the momentum map $J^\ext$.
For this purpose let us note that on the model 
space $\cM^S$ (\ref{cM_S}) the residual 
$K \times K$-action takes the form
\be
(k_L, k_R) \acts (q, p, (\eta_L, \eta_R) U(1)_*)
= (q, p, (k_L \eta_L, k_R \eta_R) U(1)_*).
\ee
Therefore it is obvious that the orbit space 
$\cM^S / (K \times K)$ can be naturally identified 
with the base manifold of the trivial principal 
$(K \times K) / U(1)_*$-bundle
\be
\pi^S \colon \cM^S \twoheadrightarrow \cP^S,
\quad
(q, p, (\eta_L, \eta_R) U(1)_*)
\mapsto (q, p).
\label{pi_S}
\ee
An immediate consequence of the above observation
is that the reduced symplectic manifold can be
identified as
\be
\cP^\ext /\! /_0 (K \times K)
\cong \cM^S / (K \times K) 
\cong \cP^S.
\ee
As is known from the theory of symplectic 
reductions, the reduced symplectic form 
$\omega^S \in \Omega^2(\cP^S)$ is uniquely
determined by the condition
\be
(\pi^S)^* \omega^S = (\Upsilon^S)^* \omega^\ext.
\label{omega_S_def}
\ee
However, since the derivative of $\Upsilon^S$ 
(\ref{Upsilon_S}) can be worked out explicitly, 
the computation of the pull-backs in 
(\ref{omega_S_def}) is almost trivial. 
Either doing the calculations by hand, or
remembering the results presented in 
\cite{FeherPusztai2007}, the following
theorem is immediate.

\begin{THEOREM}
\label{T3}
The reduced symplectic form can be written as
$\omega^S 
= 2 \sum_{c = 1}^n \ddd q_c \wedge \ddd p_c$.
That is to say, up to some trivial rescaling, 
the globally defined coordinate functions $q_c$, 
$p_c$ $(c \in \bN_n)$ form a Darboux system on 
the reduced manifold $\cP^S$.
\end{THEOREM}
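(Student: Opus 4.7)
My plan is to pull back $\vartheta + \vartheta^\cO$ through $\Upsilon^S$, where $\vartheta^\cO_\rho(\delta\rho) = \langle \xi(E), k^{-1}\delta k\rangle$ (for $\rho = \Ad_k\xi(E)$) is the standard local primitive of the KKS form on $\cO$, and then take $-d$. By the defining condition (\ref{omega_S_def}), the outcome is exactly $(\pi^S)^*\omega^S$ on $\cM^S$, from which the Darboux claim descends.

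\textbf{Key calculation.} Setting $\tilde\eta_L = \eta_L^{-1}\delta\eta_L$ and $\tilde\eta_R = \eta_R^{-1}\delta\eta_R$ in $\mfk$, and using that $[Q,\delta Q] = 0$ since both sit in the abelian $\mfa$, a direct expansion gives $y^{-1}\delta y = \Ad_{\eta_R}(e^{-\ad_Q}\tilde\eta_L + \delta Q - \tilde\eta_R)$. Pairing with $Y = \Ad_{\eta_R}L(q,p)$ and exploiting Ad-invariance of $\langle\cdot,\cdot\rangle$ yields
\[
(\Upsilon^S)^*\vartheta
= \langle\delta Q, L\rangle
+ \langle e^{-\ad_Q}\tilde\eta_L, L\rangle
- \langle\tilde\eta_R, L\rangle.
\]
Since $\delta Q\in\mfa$, the first term picks out the diagonal of $L$, which is exactly $P$: both $\xi(E)$ and $C$ lie in $\mfa^\perp$, and $\tilde\ad_Q^{-1}$ in (\ref{L_mfp}) preserves off-diagonality. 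Thus the first term equals $\tr(\delta Q\cdot P) = 2\sum_{c=1}^n p_c\,\delta q_c$.

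\textbf{Cancellation via the constraint.} Using the skew-symmetry of $\ad_Q$ with respect to $\langle\cdot,\cdot\rangle$ and the orthogonality $\mfk\perp\mfp$, the second summand equals $\langle\tilde\eta_L, (e^{\ad_Q}L)_+\rangle$. The first slot of the constraint $J^\ext = 0$ reads $(yYy^{-1})_+ = -\rho$; feeding in the parametrizations from $\Upsilon^S$ and using that the $+$-projection commutes with $\Ad_K$ (the Cartan decomposition is $K$-invariant) identifies $(e^{\ad_Q}L)_+ = -\xi(E)$. Hence this summand equals $-\langle\xi(E),\tilde\eta_L\rangle$ and cancels $(\Upsilon^S)^*\vartheta^\cO = \langle\xi(E),\tilde\eta_L\rangle$ on the nose. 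The third summand, via $L_+ = -\ri\kappa C$ in (\ref{L}), reduces to $\ri\kappa\langle\tilde\eta_R, C\rangle$, so
\[
(\Upsilon^S)^*(\vartheta + \vartheta^\cO)
= 2\sum_{c=1}^n p_c\,\delta q_c
+ \ri\kappa\,\langle\tilde\eta_R, C\rangle.
\]
The residual 1-form is closed: since $K$ is the full centralizer of $C$ in $U(N)$ (the conditions $y^* C y = C$ and $y^* y = \bsone_N$ together force $C y = y C$), the element $\ri C$ is central in $\mfk$, so the Maurer--Cartan equation together with Ad-invariance of $\langle\cdot,\cdot\rangle$ yield $d\langle\tilde\eta_R, C\rangle = -\half\langle[\tilde\eta_R,\tilde\eta_R], C\rangle = -\half\langle\tilde\eta_R,[\tilde\eta_R, C]\rangle = 0$. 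Consequently $(\pi^S)^*\omega^S = 2\sum_{c=1}^n \ddd q_c\wedge\ddd p_c$, as claimed.

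\textbf{Main obstacle.} The only genuinely delicate point is recognizing the identity $(e^{\ad_Q}L)_+ = -\xi(E)$ coming from the constraint, which is what forces the two $\tilde\eta_L$-dependent gauge terms to annihilate each other; once this is in place, the remainder is bookkeeping that runs fully parallel to the $C_n$-case computation in \cite{Pusztai_NPB2011}. The coefficient $2$ originates purely from the doubled block structure of $\mfa$ in $\tr(\delta Q\cdot P)$, so no further normalization is needed.
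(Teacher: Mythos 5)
Your proposal is correct and takes essentially the same route as the paper: the paper's proof of Theorem \ref{T3} consists precisely of computing the pull-backs in the defining relation (\ref{omega_S_def}) from the explicit derivative of $\Upsilon^S$ (with the details deferred to \cite{FeherPusztai2007}), which is exactly the computation you have carried out via the primitive one-form. Your key identity $(e^{\ad_Q}L)_+ = -\xi(E)$ (which indeed holds on $\mfL_0$, as follows from equation (\ref{tilde_Y})), the resulting cancellation against the Kirillov--Kostant--Souriau term, and the closedness of the residual $\ri\kappa\langle\tilde{\eta}_R, C\rangle$ term via the centrality of $\ri C$ in $\mfk$ constitute the content of the computation the paper leaves implicit.
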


\subsection{Solution algorithm for the Sutherland
model}
The goal of this subsection is to present a solution 
algorithm for a class of Hamiltonian systems in 
association with the family of the $\Ad$-invariant 
smooth functions defined on the Lie algebra $\mfg$. 
For, take an arbitrary $\Ad$-invariant smooth 
function $F \colon \mfg \rightarrow \bR$, i.e. 
we require
\be
F(y Y y^{-1}) = F(Y)
\qquad 
(\forall Y \in \mfg, \, \forall y \in G).
\label{F_invariant}
\ee
Now, let
\be
\pr_\mfg \colon \cP^\ext 
= G \times \mfg \times \cO
\twoheadrightarrow \mfg
\ee
denote the canonical projection onto  
$\mfg$; then it is obvious that the composite 
function $\pr_\mfg^* F = F \circ \pr_\mfg$ is 
a smooth $K \times K$-invariant function on 
$\cP^\ext$; therefore it survives the reduction. 
More precisely, the corresponding reduced 
Hamiltonian $(\pr_\mfg^* F)^S \in C^\infty(\cP^S)$ 
has the form
\be
(\pr_\mfg^* F)^S = F \circ L, 
\label{S_reduced_Hamiltonian}
\ee
as can be readily seen from the defining 
relationship
\be
(\pi^S)^* (\pr_\mfg^* F)^S 
= (\Upsilon^S)^* \pr_\mfg^* F.
\ee
It is a standard fact in reduction theory
that the Hamiltonian flows of the `unreduced'
Hamiltonian system 
$(\cP^\ext, \omega^\ext, \pr_\mfg^* F)$ staying
on the level space $\mfL_0$ project
onto the flows of the reduced system
$(\cP^S, \omega^S, F \circ L)$. However, finding 
the `unreduced' flows is a relatively simple
exercise; therefore this projection method gives 
rise to a natural and efficient solution algorithm 
for the reduced Hamiltonian system. 

To make the above observation precise, we need
the integral curves of the Hamiltonian vector
field $\bsX_{\pr_\mfg^* F} \in \mfX(\cP^\ext)$
generated by $\pr_\mfg^* F$. Recalling
$\omega^\ext$ (\ref{omega_ext}), from the 
defining relationship 
\be
\bsX_{\pr_\mfg^* F} \inner \omega^\ext
= \ddd (\pr_\mfg^* F)
\ee 
we find easily that 
at each point $(y, Y, \rho) \in \cP^\ext$ 
the Hamiltonian vector field has the form
\be
(\bsX_{\pr_\mfg^* F})_{(y, Y, \rho)}
= (y \nabla F(Y))_y \oplus 0_Y \oplus 0_\rho
\in T_{(y, Y, \rho)} \cP^\ext,
\label{Ham_vect_field_F}
\ee
where the gradient $\nabla F(Y) \in \mfg$ is
defined by the condition 
\be
\langle \nabla F(Y), \delta Y \rangle
= (\ddd F)_Y (\delta Y)
\qquad
(\forall \delta Y \in T_Y \mfg \cong \mfg).
\ee
From (\ref{Ham_vect_field_F}) it is clear that 
the induced Hamiltonian flows are \emph{complete}, 
having the form
\be
\bR \ni t 
\mapsto 
(y_0 e^{t \nabla F(Y_0)}, Y_0, \rho_0) 
\in \cP^\ext
\ee
with some $(y_0, Y_0, \rho_0) \in \cP^\ext$.
Therefore the reduced Hamiltonian 
flows are also complete. 

Now take an arbitrary flow 
\be
\bR \ni t 
\mapsto
(q(t), p(t)) \in \cP^S
\label{reduced_flow_S}
\ee
induced by the reduced Hamiltonian $F \circ L$,
and let $L_0 = L(q(0), p(0))$. Since the
unreduced flow passing through the point
$(e^{Q(0)}, L_0, \xi(E)) \in \mfL_0$
at $t = 0$ projects onto the reduced flow 
(\ref{reduced_flow_S}), from the definition of
the parametrization $\Upsilon^S$ (\ref{Upsilon_S}) 
it is clear that for each $t \in \bR$ there is 
some $(\eta_L(t), \eta_R(t)) \in K \times K$
such that
\be
(e^{Q(0)} e^{t \nabla F(L_0)}, 
L_0, \xi(E)) 
= (\eta_L(t) e^{Q(t)} \eta_R(t)^{-1},
\eta_R(t) L(q(t), p(t)) \eta_R(t)^{-1},
\eta_L(t) \xi(E) \eta_L(t)^{-1}).
\label{S_flow}
\ee
By comparing the $\mfg$-components of the above
equation we see that
\be
L_0 = \eta_R(t) L(q(t), p(t)) \eta_R(t)^{-1},
\ee
from where we conclude that during the time evolution 
of the reduced dynamics the Lax matrix $L$ (\ref{L}) 
undergoes an isospectral deformation. What is even 
more important, the $G$-component of equation
(\ref{S_flow}) leads to the relationship
\be
e^{Q(0)} e^{t \nabla F(L_0)} 
e^{t \nabla F(L_0)^*} e^{Q(0)}
= \eta_L(t) e^{2 Q(t)} \eta_L(t)^{-1},
\ee
which entails the spectral identification
\be
\sigma(e^{2 Q(t)})
= \sigma(e^{Q(0)} e^{t \nabla F(L_0)} 
e^{t \nabla F(L_0)^*} e^{Q(0)})
= \sigma(e^{2 Q(0)} e^{t \nabla F(L_0)} 
e^{t \nabla F(L_0)^*}).
\ee
Somewhat more informally one can say that the
matrix flow $Q(t)$, and so the trajectory $q(t)$,
can be recovered simply by diagonalizing the matrix 
flow
\be
t \mapsto 
e^{2 Q(0)} 
e^{t \nabla F(L_0)} 
e^{t \nabla F(L_0)^*}.
\label{S_matrix_flow}
\ee
Besides providing a nice solution algorithm,
the above observation can also be seen
as the starting point of the scattering 
theoretic analysis of the reduced Hamiltonian 
system $(\cP^S, \omega^S, F \circ L)$.
Indeed, by analyzing the temporal asymptotics
of the matrix flow (\ref{S_matrix_flow}), one
can understand the temporal asymptotics of the
trajectory $q(t)$ as well.

In order to establish the connection with the 
Sutherland many-particle systems, notice that 
the Hamiltonian of the $BC_n$ Sutherland model 
(\ref{H_S}) with three independent coupling 
constants can be realized as the reduced 
Hamiltonian induced by the $\Ad$-invariant 
quadratic function 
\be
F_2 (Y) = \frac{1}{4} \langle Y, Y \rangle 
= \frac{1}{4} \tr(Y^2)
\qquad
(Y \in \mfg).
\label{F_2}
\ee
Indeed, a simple computation reveals that 
\be
(\pr_\mfg^* F_2)^S = H^S,
\ee
with coupling constants
\be
g^2 = \mu^2, 
\quad
g_1^2 = \half \nu \kappa,
\quad
g_2^2 = \half (\nu - \kappa)^2.
\label{S_coupling_parameters_from_RSvD}
\ee
Note that if $\nu \kappa \geq 0$, then the
interaction is purely repulsive.
Conversely, if one starts with an arbitrary triple
of the non-negative Sutherland coupling constants 
$(g^2, g_1^2, g_2^2)$ satisfying the inequalities 
$g^2 > 0$ and $g_1^2 + g_2^2 > 0$, then the 
corresponding repulsive Sutherland model can be 
recovered e.g. by choosing the parameter triple 
$(\mu, \nu, \kappa)$ with components
\be
\mu = - \vert g \vert,
\quad
\nu 
= \frac{\vert g_2 \vert + \sqrt{g_2^2 + 4 g_1^2}}
{\sqrt{2}},
\quad
\kappa = \frac{2 \sqrt{2} g_1^2}
{\vert g_2 \vert + \sqrt{g_2^2 + 4 g_1^2}}.
\label{RSvD_coupling_parameters_from_S}
\ee
Notice that the above defined parameters satisfy 
the inequalities $\mu < 0$, $\nu > 0$ and
$\kappa \geq 0$, therefore the assumptions made 
in Lemmas \ref{L1} and \ref{L2} are automatically 
met. As for the solution algorithm of the Sutherland 
model, note that $\nabla F_2(L_0) = L_0 / 2$, 
whence from equation (\ref{S_matrix_flow}) we see 
at once that the trajectories can be determined by 
diagonalizing the matrix flow
\be
t \mapsto 
e^{2 Q(0)} e^{\half t L_0} e^{\half t L_0^*}.
\label{S_matrix_flow_spec}
\ee

We close this section with some remarks on the
range of the attainable Sutherland coupling 
constants coming from the proposed symplectic 
reduction picture. From equation 
(\ref{RSvD_coupling_parameters_from_S}) it is clear 
that \emph{any} non-negative triple of the 
coupling constants $(g^2, g_1^2, g_2^2)$ with 
$g^2 > 0$ and $g_1^2 + g_2^2 > 0$ can be realized 
by an appropriate choice of the parameters 
$(\mu, \nu, \kappa)$ satisfying $\nu \kappa \geq 0$. 
Due to the repulsive nature of the interaction, 
in these cases the Sutherland model has only 
scattering states. It conforms with Lemma \ref{L1},
according which the eigenvalues of the Lax 
operator $L$ are real. Thus the scattering 
properties of the model can be understood 
explicitly by analyzing the temporal asymptotics 
of the truly exponential-type matrix flow 
(\ref{S_matrix_flow_spec}). 

However, looking at 
(\ref{S_coupling_parameters_from_RSvD}), 
it is obvious that from the proposed reduction 
picture we can derive the Sutherland model even 
with $g_1^2 < 0$, provided that $\nu \kappa < 0$. 
Since Lemma \ref{L1} does not apply in these cases, 
in certain non-empty region of the phase space 
$\cP^S$ some of the eigenvalues of the Lax operator 
may be purely imaginary. Therefore, as can be 
conjectured from (\ref{S_matrix_flow_spec}), 
some of the Sutherland particles may exhibit  
oscillatory behavior with bounded trajectories. 
To fully understand the details of this phenomenon, 
one should sharpen Lemma \ref{L1} to characterize 
the spectral properties of the Lax matrix for 
$\nu \kappa < 0$. Nevertheless, since our primary 
interest is to study the duality properties of the
standard hyperbolic Sutherland $BC_n$ model with 
purely repulsive interaction, we leave this 
interesting exercise for a future study.

\section{The rational $BC_n$ RSvD model}
\label{S4}
\setcounter{equation}{0}
In this section we work out the rational $BC_n$ RSvD 
model with three independent coupling constants 
from the proposed symplectic reduction framework. 
Let us keep in mind that Section \ref{S3} and the 
present section provide the necessary technical 
background to establish the action-angle duality 
between the standard repulsive Sutherland model 
(\ref{H_S}) and the rational RSvD (\ref{H_R}) 
model with $\nu \kappa \geq 0$. Therefore, 
according to our discussion in the concluding 
remarks of Section \ref{S3}, we may assume at the 
outset that $\mu < 0$, $\nu > 0$ and 
$\kappa \geq 0$. 

\subsection{The phase space of the RSvD model}
\setcounter{equation}{0}
First of all, we need some objects introduced in 
the study \cite{Pusztai_NPB2011} of the $C_n$-type 
rational RSvD model associated with the pair of
non-zero parameters $(\mu, \nu)$. For each 
$a \in \bN_n$ let us consider the complex-valued 
rational function 
\be
\mfc \ni \lambda 
= (\lambda_1, \ldots, \lambda_n)
\mapsto
z_a(\lambda) 
= - \left(1 + \frac{\ri \nu}
{\lambda_a} \right)
\prod_{\substack{d = 1 \\ (d \neq a)}}^n 
\left( 1 + \frac{2 \ri \mu}
{\lambda_a - \lambda_d} \right)
\left( 1 + \frac{2 \ri \mu}{
\lambda_a + \lambda_d} \right) \in \bC.
\label{z}
\ee
Recall also that the Lax matrix of the rational 
$C_n$ RSvD model is defined by the matrix-valued 
function
\be
\cA \colon 
\cP^R \rightarrow \exp(\mfp),
\quad
(\lambda, \theta) \mapsto \cA(\lambda, \theta),
\label{cA_def}
\ee
where the matrix entries lying in the diagonal
$n \times n$ blocks are given by the formulae
\begin{align}
& \cA_{a, b}(\lambda, \theta) 
= e^{\theta_a + \theta_b}
\vert z_a(\lambda) z_b(\lambda) \vert^\half 
\frac{2 \ri \mu}{2 \ri \mu + \lambda_a - \lambda_b},
\\
& \cA_{n + a, n + b}(\lambda, \theta) 
= e^{-\theta_a - \theta_b}
\frac{ \overline{z_a(\lambda)} z_b(\lambda)}
{\vert z_a(\lambda) z_b(\lambda) \vert^\half} 
\frac{2 \ri \mu}{2 \ri \mu - \lambda_a + \lambda_b},
\end{align}
meanwhile the matrix entries belonging to the
off-diagonal $n \times n$ blocks have the form
\be 
\cA_{a, n + b}(\lambda, \theta) 
= \overline{\cA_{n + b, a}(\lambda, \theta)} 
= e^{\theta_a - \theta_b} z_b(\lambda) 
\vert z_a(\lambda) z_b(\lambda)^{-1} \vert^\half
\frac{2 \ri \mu}{2 \ri \mu + \lambda_a + \lambda_b}
+ \frac{\ri(\mu - \nu)}{\ri \mu + \lambda_a} 
\delta_{a, b},
\label{cA_entries}
\ee
for any $a, b \in \bN_n$. Besides the Lax matrix 
$\cA$, from the theory of the $C_n$ RSvD 
model we need the column vector 
$\cF(\lambda, \theta) \in \bC^N$ with components
\be 
\cF_a(\lambda, \theta) 
= e^{\theta_a} \vert z_a(\lambda) \vert^\half
\quad \mbox{and} \quad
\cF_{n + a}(\lambda, \theta) 
= e^{- \theta_a} \overline{z_a(\lambda)} 
\vert z_a(\lambda) \vert^{-\half},
\label{cF}
\ee
where $a \in \bN_n$. Also, in the forthcoming
computations we shall frequently encounter 
the column vector
\be
\cV(\lambda, \theta) 
= \cA(\lambda, \theta)^{-\half} 
\cF(\lambda, \theta)
\in \bC^N.
\label{cV}
\ee
Notice that $\cV(\lambda, \theta)$ is 
well-defined, since the positive definite Lax 
matrix $\cA(\lambda, \theta) \in \exp(\mfp)$ has
a unique square root belonging to $\exp(\mfp)$.
Furthermore, as it can be seen from Proposition 8 
in \cite{Pusztai_NPB2011}, we have $\cV^* \cV = N$
and $C \cV + \cV = 0$.

In order to handle the $BC_n$ RSvD 
model associated with the parameter triple  
$(\mu, \nu, \kappa)$, we need some new objects, 
too. In particular, let us introduce the smooth
functions $\alpha$ and $\beta$ defined on the 
\emph{positive half-line} by the formulae
\be
\alpha(x) = \frac{\sqrt{x + \sqrt{x^2 + \kappa^2}}}
{\sqrt{2 x}}
\quad \mbox{and} \quad
\beta(x) = \ri \kappa \frac{1}{\sqrt{2 x}}
\frac{1}{\sqrt{x + \sqrt{x^2 + \kappa^2}}},
\label{alpha&beta}
\ee
where $x \in ( 0, \infty )$. Now, with each
positive $n$-tuple
$\lambda = (\lambda_1, \ldots, \lambda_n) 
\in (0, \infty)^n$
we associate the $n \times n$ diagonal matrix
$\bslambda = \diag(\lambda_1, \ldots, \lambda_n)$
and consider the Hermitian $N \times N$ matrix
\be
h(\lambda) = \begin{bmatrix}
\alpha(\bslambda) & \beta(\bslambda) \\
-\beta(\bslambda) & \alpha(\bslambda)
\end{bmatrix}.
\label{h}
\ee
Making use of the functional equation 
$\alpha(x)^2 + \beta(x)^2 = 1$, one can show 
that $h(\lambda) C h(\lambda) = C$, i.e. the matrix  
$h(\lambda)$ is an Hermitian element of the Lie 
group $G$ (\ref{G}). Finally, let us introduce the 
shorthand notation
\be
\Lambda = \diag(\lambda_1, \ldots, \lambda_n,
-\lambda_1, \ldots, -\lambda_n) \in \mfa. 
\ee
Having equipped with the above objects, we 
are now in a position to provide an appropriate 
parametrization of the level set $\mfL_0$ 
(\ref{mfL_0}) based on the diagonalization of 
the Lie algebra part of $\mfL_0$.

\begin{LEMMA}
\label{L4}
Suppose that $\nu \neq 2 \mu$, $\nu + \kappa \neq 0$ 
and $\nu \kappa \geq 0$; then for each point 
$(y, Y, \rho) \in \mfL_0$ there are some 
$\lambda \in \mfc$, $\theta \in \bR^n$ and 
$\eta_L, \eta_R \in K$, such that
\be
y = \eta_L \cA(\lambda, \theta)^\half 
h(\lambda)^{-1} \eta_R^{-1},
\quad
Y = \eta_R h(\lambda) 
\Lambda h(\lambda)^{-1} \eta_R^{-1},
\quad
\rho = \eta_L \xi(\cV(\lambda, \theta)) 
\eta_L^{-1}.
\ee
\end{LEMMA}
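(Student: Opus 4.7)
The plan is to dualize the argument of Lemma \ref{L2}: instead of applying the $KAK$ decomposition to the group element $y$, I would first diagonalize the Lie algebra element $Y$, and then let the constraint dictate the structure of $y$ in terms of the RSvD data.

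First, the constraint $J^\ext(y, Y, \rho) = 0$ immediately forces $Y_+ = -\ri \kappa C$, so $Y = Y_- - \ri \kappa C$ with $Y_- \in \mfp$. Since $\mfa$ is a maximal Abelian subspace of $\mfp$, there exist $k_R \in K$ and reals $s_1 \geq \ldots \geq s_n \geq 0$ such that $Y_- = k_R \diag(s_1, \ldots, s_n, -s_1, \ldots, -s_n) k_R^{-1}$. Because $K$ consists of unitaries commuting with $C$, $k_R$ commutes with $C$, yielding $Y = k_R(\tilde\Lambda - \ri \kappa C) k_R^{-1}$ with $\tilde\Lambda = \diag(s_1, \ldots, s_n, -s_1, \ldots, -s_n)$. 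To convert the $s_i$ into the $\lambda_i$, I would invoke Lemma \ref{L2} (applicable since $\nu + \kappa \neq 0$) to present $Y$ as conjugate to the Sutherland Lax matrix $L(q, p)$ for some $(q, p) \in \cP^S$, and then Lemma \ref{L1} (applicable since $\nu \neq 2 \mu$ and $\nu \kappa \geq 0$) to conclude $Y^2 > 0$. Exploiting $\{Y_-, C\} = 0$ for $Y_- \in \mfp$, a direct computation gives $Y^2 = Y_-^2 - \kappa^2 \bsone_N$, whence $s_i > |\kappa|$ strictly. Setting $\lambda_i := \sqrt{s_i^2 - \kappa^2} > 0$, the identities $\alpha(x)^2 - \beta(x)^2 = \sqrt{x^2 + \kappa^2}/x$ and $-2 \alpha(x) \beta(x) = -\ri \kappa/x$ (readily verified from (\ref{alpha&beta})) yield $\tilde\Lambda - \ri \kappa C = h(\lambda) \Lambda h(\lambda)^{-1}$. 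Choosing $\eta_R := k_R$ delivers the desired form of $Y$.

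For $y$ and $\rho$, I would write $\rho = \xi(V)$ with $V \in \bC^N$ normalized as in (\ref{cO}), and introduce the auxiliary element $w := y \eta_R h(\lambda) \in G$, so that $y Y y^{-1} = w \Lambda w^{-1}$ and the remaining component of the constraint reads $(w \Lambda w^{-1})_+ = -\xi(V)$. Up to relabelling, this is precisely the equation analyzed in the $C_n$ reduction of \cite{Pusztai_NPB2011} (Proposition 8 and Lemma 4 therein). Following that blueprint, I would multiply by $w^*$ on the left and $w$ on the right to convert the equation into an algebraic relation for $B := w^* w \in \exp(\mfp)$ and $U := w^* V \in \bC^N$, and then inspect the diagonal and off-diagonal blocks entry by entry, in parallel with the $(a, b)$ versus $(c, c)$ case analysis in Lemma \ref{L2}. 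The outcome would be that $B = w^* w$ matches the RSvD Lax matrix $\cA(\lambda, \theta)$ from (\ref{cA_def})--(\ref{cA_entries}) for a suitable $\theta \in \bR^n$, and correspondingly $V = \eta_L \cV(\lambda, \theta)$ for a suitable $\eta_L \in K$. The regularity $\lambda \in \mfc$ is then forced exactly as in Lemma \ref{L2}: the strict inequalities $\lambda_a \neq \lambda_b$ come from the non-vanishing of entries proportional to $2 \ri \mu + \lambda_a - \lambda_b$, while $\lambda_n > 0$ uses the standing hypothesis $\nu + \kappa \neq 0$. The Cartan decomposition $w = \eta_L \cdot (w^* w)^{1/2}$ then gives $w = \eta_L \cA(\lambda, \theta)^{1/2}$, from which $y = w h(\lambda)^{-1} \eta_R^{-1} = \eta_L \cA(\lambda, \theta)^{1/2} h(\lambda)^{-1} \eta_R^{-1}$, while $\rho = \xi(V) = \xi(\eta_L \cV(\lambda, \theta)) = \eta_L \xi(\cV(\lambda, \theta)) \eta_L^{-1}$ (using that $K$-conjugation acts on $\xi$ through its vector argument).

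The main obstacle is the blockwise matching of $B = w^* w$ with the explicit rational-function formula for $\cA(\lambda, \theta)$ in (\ref{cA_def})--(\ref{cA_entries}). The $C_n$ analogue provides a close template, but the twist by $h(\lambda)$ required by the additional coupling $\kappa$ modifies the relevant identities in a non-routine way; in particular, matching the rational function $z_a(\lambda)$ from (\ref{z}) against the Hermitian-part entries, which now carry $\alpha(\bslambda), \beta(\bslambda)$ factors, will require careful bookkeeping.
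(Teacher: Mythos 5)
Your proposal follows essentially the same route as the paper's own proof: diagonalize $Y_-$ inside $\mfp$, combine Lemmas \ref{L1} and \ref{L2} to get $Y^2 > 0$ and set $\lambda_a = \sqrt{d_a^2 - \kappa^2}$, use the $\alpha$, $\beta$ identities to write $D - \ri \kappa C = h(\lambda) \Lambda h(\lambda)^{-1}$, convert the remaining constraint into the rank-one--perturbation equation for $B = w^* w$ and $U = w^* V$, and finish with the polar decomposition of $w$ (the paper identifies that equation with equation (31) of \cite{Pusztai_JPA2011}, whose Lemmas 1 and 2 give both the regularity $\lambda \in \mfc$ and the identification of $B$). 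Two small corrections: the identification only yields $B = m \cA(\lambda, \theta) m^*$ up to a phase matrix $m \in M$, so your choice $\eta_R = k_R$ must be amended to $\eta_R = k_R m$ (harmless for the $Y$-parametrization, since $M$ commutes with both $h(\lambda)$ and $\Lambda$); and the ``main obstacle'' you fear is actually vacuous --- once the constraint is rewritten in terms of $B = w^* w$, the factors $\alpha(\bslambda)$, $\beta(\bslambda)$ disappear entirely (they are absorbed into the definition of $w$), and the blockwise equation is literally the $C_n$ one with no modified identities to track.
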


\begin{proof}
Take an arbitrary point $(y, Y, \rho) \in \mfL_0$.
Remembering the momentum map $J^\ext$ (\ref{J_ext}), 
it is clear that $Y_+ = - \ri \kappa C$. 
On the other hand, since any element of the subspace
$\mfp$ (\ref{mfkp}) can be conjugated into $\mfa$ 
(\ref{mfa}) by some element of $K$ (\ref{K}), we 
can write 
\be
Y_- = k_R D k_R^{-1} 
\ee
with some $k_R \in K$ and
\be 
D = \diag(d_1, \ldots, d_n, -d_1, \ldots, -d_n) 
\in \mfa 
\ee
satisfying $d_1 \geq \ldots \geq d_n \geq 0$.
Therefore 
\be
Y = k_R (D - \ri \kappa C) k_R^{-1},
\label{Y_p1}
\ee
from where we obtain 
\be
Y^2 
= k_R (D^2 - \kappa^2 \bsone_N) k_R^{-1}.
\label{L4_Y2}
\ee
However, by combining Lemmas \ref{L1} and \ref{L2}, 
it is obvious that $Y^2$ is \emph{positive definite}, 
i.e. $Y^2 > 0$. Giving a glance at the
above equation (\ref{L4_Y2}), it is thus clear 
that $\lambda_a = \sqrt{d_a^2 - \kappa^2}$ is 
a well-defined \emph{positive} real number for 
each $a \in \bN_n$, satisfying the inequalities
$\lambda_1 \geq \ldots \geq \lambda_n > 0$.
Now, recalling $h(\lambda)$
(\ref{h}) and utilizing the functional equations
\be
\alpha(x)^2 - \beta(x)^2 
= \sqrt{1 + \frac{\kappa^2}{x^2}}
\quad \mbox{and} \quad
2 \alpha(x) \beta(x) = \frac{\ri \kappa}{x},
\ee
it is not hard to see that
\be 
h(\lambda) \Lambda h(\lambda)^{-1}
= D - \ri \kappa C.
\ee
Plugging this observation into (\ref{Y_p1}), 
we get
\be
Y = k_R h(\lambda) \Lambda 
h(\lambda)^{-1} k_R^{-1}.
\label{Y_p_OK}
\ee
To proceed further, notice that 
the momentum map constraint yields
the relationship
\be
0 = (y Y y^{-1})_+ + \rho 
= \frac{y Y y^{-1} - (y Y y^{-1})^*}{2} + \rho
\ee  
as well. This is clearly equivalent
to the equation
\be
y^* y Y - Y^* y^* y + 2 y^* \rho y = 0.
\label{constraint_1}
\ee 
However, due to (\ref{cO}) we can write 
$\rho = \xi(V)$, where $V \in \bC^N$ is an 
appropriate column vector satisfying $V^* V = N$ 
and $C V + V = 0$. Thus, recalling (\ref{xi}), 
the above constraint (\ref{constraint_1}) 
can be cast into the form
\be
y^* y Y - Y^* y^* y + 2 \ri \mu y^* V V^* y
- 2 \ri \mu y^* y + 2 \ri (\mu - \nu) C = 0.
\ee
Plugging the parametrization (\ref{Y_p_OK}) into 
the above equation, we get
\be
2 \ri \mu h k_R^* y^* y k_R h
+ \Lambda h k_R^* y^* y k_R h
- h k_R^* y^* y k_R h \Lambda
= 2 \ri \mu (h k_R^* y^* V) (h k_R^* y^* V)^*
+ 2 \ri (\mu - \nu) C.
\label{key}
\ee
At this point let us note that the last equation 
can be naturally identified with equation (31) in 
\cite{Pusztai_JPA2011}. Since this equation
was the starting point of our scattering theoretic
analysis of the hyperbolic $C_n$ Sutherland model,
we have full control over the structure of
its ingredients. In particular, due to Lemma 1
in \cite{Pusztai_JPA2011}, we know that $\Lambda$
must be a \emph{regular} element of $\mfa$, whence
$\lambda_1 > \ldots > \lambda_n > 0$, i.e.
$\lambda \in \mfc$. Furthermore, by Lemma 2
in \cite{Pusztai_JPA2011}, we can write
\be
h(\lambda) k_R^* y^* y k_R h(\lambda) 
= m \cA(\lambda, \theta) m^*
\ee
with some $\theta \in \bR^n$ and $m \in M$. 
Noticing that the group element $h(\lambda) \in G$ 
commutes with each element of $M$ (\ref{M}), we 
obtain 
\be
(y k_R m h)^* y k_R m h
= m^* h k_R^* y^* y k_R h m
= \cA.
\label{A_p}
\ee
Third, our analysis yields the relationship 
\be
h(\lambda) k_R^* y^* V = m \cF(\lambda, \theta)
\label{V_p}
\ee
as well. 
Now let us notice that equation (\ref{A_p}) 
allows us to give a characterization of the 
global Cartan decomposition (polar decomposition) 
of $y k_R m h(\lambda)$. Indeed, we have 
\be
y k_R m h(\lambda) 
= \eta_L \cA(\lambda, \theta)^\half
\ee 
with some $\eta_L \in K$; thus the parametrization
\be
y 
= \eta_L \cA(\lambda, \theta)^\half 
h(\lambda)^{-1} (k_R m)^{-1}
\label{y_param}
\ee
is immediate. Plugging this into (\ref{V_p}) and
remembering the definition (\ref{cV}), we obtain 
at once that $V = \eta_L \cV(\lambda, \theta)$; 
therefore the relationship
\be
\rho = \xi(V) 
= \eta_L \xi(\cV(\lambda, \theta)) \eta_L^{-1}
\label{rho_param}
\ee
also follows. Finally, upon setting 
$\eta_R = k_R m \in K$, from the equations 
(\ref{y_param}), (\ref{Y_p_OK}) and 
(\ref{rho_param}) we see that the 
proof is complete.
\end{proof}

Motivated by the above lemma, we can introduce
a natural parametrization of the level set
$\mfL_0$ (\ref{mfL_0}). To make this idea 
precise, first let us define the smooth product 
manifold
\be
\cM^R = \cP^R \times (K \times K) / U(1)_*.
\label{cM_R}
\ee
Now, a trivial generalization of the proof of 
Lemma 10 in \cite{Pusztai_NPB2011} immediately 
convinces us that the smooth map 
\be
\Upsilon^R \colon \cM^R \rightarrow \cP^\ext
\ee
defined by the assignment
\be
(\lambda, \theta, (\eta_L, \eta_R) U(1)_*)
\mapsto
(\eta_L \cA(\lambda, \theta)^\half
h(\lambda)^{-1} \eta_R^{-1},
\eta_R h(\lambda) \Lambda 
h(\lambda)^{-1} \eta_R^{-1},
\eta_L \xi(\cV(\lambda, \theta)) \eta_L^{-1})
\label{Upsilon_R}
\ee
is a well-defined injective immersion with image 
$\Upsilon^R(\cM^R) = \mfL_0$. Repeating the same
arguments that we applied in the Sutherland picture, 
it is clear that $\Upsilon^R$ factors through the
embedded submanifold $(\mfL_0, \iota_0)$, and the 
resulting smooth map 
$\Upsilon^R_0 \colon \cM^R \rightarrow \mfL_0$
is a diffeomorphism. That is to say, the pair
$(\cM^R, \Upsilon^R)$ provides an equivalent
model for the smooth embedded submanifold 
$(\mfL_0, \iota_0)$.

In order to complete the reduction of $\cP^\ext$ 
at the zero value of the momentum map $J^\ext$, 
let us observe that the residual 
$K \times K$-action on the model space $\cM^R$ 
(\ref{cM_R}) takes the form
\be
(k_L, k_R) 
\acts 
(\lambda, \theta, (\eta_L, \eta_R)U(1)_*)
= (\lambda, \theta, 
(k_L \eta_L, k_R \eta_R) U(1)_*).
\ee
Therefore the orbit space $\cM^R / (K \times K)$
gets naturally identified with the base manifold
of the trivial principal 
$(K \times K) / U(1)_*$-bundle
\be
\pi^R \colon \cM^R \twoheadrightarrow \cP^R,
\quad
(\lambda, \theta, (\eta_L, \eta_R)U(1)_*)
\mapsto
(\lambda, \theta).
\label{pi_R}
\ee
It is thus evident that for the reduced
symplectic manifold we have the alternative
identification
\be
\cP^\ext /\!/_0 (K \times K)
\cong
\cM^R / (K \times K)
\cong
\cP^R.
\ee

Our remaining task is to make explicit the reduced 
symplectic form $\omega^R \in \Omega^2(\cP^R)$
naturally induced on the reduced phase space $\cP^R$. 
Just as in the Sutherland case, it is a tempting 
idea to compute the reduced symplectic structure 
via a formula analogous to (\ref{omega_S_def}). 
However, due to the presence of the square root of 
$\cA$ in $\Upsilon^R$ (\ref{Upsilon_R}), this 
approach seems to be hopeless. Instead, it is
more expedient to invoke the alternative machinery 
presented in Subsection 4.3 of paper 
\cite{Pusztai_NPB2011}. 
Namely, by analyzing the Poisson brackets
of the auxiliary $K \times K$-invariant functions 
defined in equations (4.65) and (4.66) of 
\cite{Pusztai_NPB2011}, an almost verbatim 
computation as in the $C_n$ case convinces us that 
the standard coordinates of $\cP^R$ are canonical.

\begin{THEOREM}
\label{T5}
Utilizing the global coordinate functions
$\lambda_c$ and $\theta_c$ $(c \in \bN_n)$
defined on the reduced manifold $\cP^R$, the 
reduced symplectic structure takes the form
$\omega^R 
= 2 \sum_{c = 1}^n 
\ddd \theta_c \wedge \ddd \lambda_c$.
\end{THEOREM}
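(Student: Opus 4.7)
The plan is to follow the strategy flagged by the author: rather than pull back $\omega^\ext$ along $\Upsilon^R$ directly, which is obstructed by the presence of the square root $\cA(\lambda,\theta)^{\half}$ and of $h(\lambda)$ in (\ref{Upsilon_R}), verify the canonical form of $\omega^R$ by comparing Poisson brackets of a suitable family of $K\times K$-invariant observables on $\cP^\ext$ with those of the coordinate functions $\lambda_c$, $\theta_c$ on $\cP^R$. Since $K\times K$-invariant functions on $\cP^\ext$ descend bijectively to smooth functions on the reduced space, and since Marsden--Weinstein reduction intertwines the reduced and unreduced Poisson brackets, it is enough to match these two bracket structures on a collection of observables that generates the algebra of smooth functions on $\cP^R$.

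First, I would single out two families of $K\times K$-invariant functions on $\cP^\ext$. On the $\mfg$-slot, the $\Ad$-invariants $\phi_k(y,Y,\rho)=\tfrac{1}{2k}\tr(Y^{2k})$ are manifestly invariant. For the group slot, note that $y^*y$ transforms under $(k_L,k_R)$ as $y^*y\mapsto k_R\, y^*y\, k_R^{-1}$, while $Y\mapsto k_R Y k_R^{-1}$, so every function of the form $\psi_{j,k}(y,Y,\rho)=\tr\bigl((y^*y)^{j}\, Y^{2k-1}\bigr)$ is $K\times K$-invariant. These are the analogs of the auxiliary functions used in equations (4.65) and (4.66) of \cite{Pusztai_NPB2011}.

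Next, I would evaluate both families on the parametrization (\ref{Upsilon_R}). Because $Y$ is conjugate to $h(\lambda)\Lambda h(\lambda)^{-1}$ and $h(\lambda)\Lambda h(\lambda)^{-1}=D-\ri\kappa C$ with $D=\diag(d_1,\ldots,d_n,-d_1,\ldots,-d_n)$, $d_a^2=\lambda_a^2+\kappa^2$, one gets $Y^2\sim\diag(\lambda_1^2,\ldots,\lambda_n^2,\lambda_1^2,\ldots,\lambda_n^2)$, so $\phi_k$ pulls back to $\sum_c\lambda_c^{2k}/k$, and the $\phi_k$ already generate (in the usual symmetric-functions sense) an algebra from which $\lambda_1,\ldots,\lambda_n$ can be separated on $\mfc$. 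Meanwhile $y^*y$ pulls back to $(k_R m)\,\cA(\lambda,\theta)\,(k_R m)^{-1}$, so $\psi_{j,k}$ pulls back to $\tr\bigl(\cA(\lambda,\theta)^{j}\,(h(\lambda)\Lambda h(\lambda)^{-1})^{2k-1}\bigr)$, a function that depends non-trivially on $\theta$ through the $e^{\pm 2\theta_c}$ factors in $\cA$. A standard non-degeneracy argument then shows that the joint differentials of the $\phi_k$ and $\psi_{j,k}$ span the cotangent space at each point of $\cP^R$, so it suffices to match Poisson brackets.

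Finally, I would compute $\{\phi_k,\phi_l\}$, $\{\phi_k,\psi_{j,l}\}$ and $\{\psi_{i,k},\psi_{j,l}\}$ upstairs using the canonical bracket on $\cP^\ext$ obtained from $\omega = -\ddd\vartheta$ with $\vartheta_{(y,Y)}(\delta y\oplus\delta Y)=\langle y^{-1}\delta y,Y\rangle$; the orbit factor $\cO$ contributes nothing since neither family involves $\rho$. Evaluating on $\mfL_0$ via $\Upsilon^R$ and comparing with the brackets computed from the candidate form $\tilde\omega=2\sum_c\ddd\theta_c\wedge\ddd\lambda_c$ on $\cP^R$ gives the desired identity $\omega^R=\tilde\omega$. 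The main obstacle is bookkeeping: tracking how the $\kappa$-dependent conjugation by $h(\lambda)$ contributes to derivatives of the pulled-back $\psi_{j,k}$. The saving grace is that $h(\lambda)$ commutes with every element of $M$ (\ref{M}), and that inside the $\Ad$-invariant $\phi_k$ the $h(\lambda)$-conjugation drops out altogether; consequently the computation reduces to a nearly verbatim copy of Subsection 4.3 of \cite{Pusztai_NPB2011}, with the $BC_n$ corrections entering only through the explicit replacement $d_a^2=\lambda_a^2+\kappa^2$, which is why the author appeals to the $C_n$ calculation rather than reproducing it.
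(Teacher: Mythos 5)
Your overall strategy coincides with the paper's: the author likewise abandons the direct pull-back of $\omega^\ext$ through $\Upsilon^R$ (because of the square root of $\cA$) and instead verifies the Darboux property by analyzing Poisson brackets of auxiliary $K\times K$-invariant functions, deferring the actual computation to Subsection 4.3 of \cite{Pusztai_NPB2011}; your two families $\phi_k$, $\psi_{j,k}$ are close analogues of those auxiliary functions, and your reduction logic (invariant functions descend, reduced brackets are computed by unreduced ones, and a family with spanning differentials determines the Poisson bivector and hence the symplectic form) is the intended one.

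There is, however, a concrete error at the one point where you make the $BC_n$ bookkeeping explicit, and it sits exactly where you yourself locate ``the main obstacle''. From $y = \eta_L \cA(\lambda,\theta)^\half h(\lambda)^{-1}\eta_R^{-1}$, with $h(\lambda)$ Hermitian and $\eta_L,\eta_R$ unitary, one finds
\[
y^* y \;=\; \eta_R\, h(\lambda)^{-1}\cA(\lambda,\theta)\, h(\lambda)^{-1}\,\eta_R^{-1} \;=\; \eta_R\, \cA^\bc(\lambda,\theta)\, \eta_R^{-1},
\]
not $\eta_R\,\cA(\lambda,\theta)\,\eta_R^{-1}$: you have imported the $C_n$ formula (where $h=\bsone_N$) without the $h$-dressing. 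Accordingly $\psi_{j,k}$ restricts to $\tr\bigl((\cA^\bc)^j (h\Lambda h^{-1})^{2k-1}\bigr)$ rather than $\tr\bigl(\cA^j (h\Lambda h^{-1})^{2k-1}\bigr)$; already for $j=k=1$ these differ, being $\tr(\cA\,\Lambda\, h^{-2})$ versus $\tr(\cA\, h \Lambda h^{-1})$, and $h(\lambda)$ does not commute with $\Lambda$ when $\kappa\neq 0$. The distinction is not cosmetic: it is precisely the reason the paper promotes $\cA^\bc = h^{-1}\cA h^{-1}$, and not $\cA$, to the role of Lax matrix of the $BC_n$ RSvD model (its spectrum is $\{e^{\pm 2\check{q}_c}\}$, and $\half\tr(\cA^\bc)$ reproduces $H^R$ up to a constant, which is false for $\half\tr(\cA)$). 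With your identification, the downstairs brackets against the candidate form $\tilde\omega$ would be computed for the wrong functions of $(\lambda,\theta)$, and the matching with the upstairs brackets would fail. For the same reason, your closing claim that the $BC_n$ corrections enter ``only through the replacement $d_a^2=\lambda_a^2+\kappa^2$'' is too optimistic: inside the mixed invariants the $h(\lambda)$-conjugation survives (it drops out only in the purely spectral $\phi_k$, as you correctly note). Two lesser points: $\tr\bigl((y^*y)^j Y^{2k-1}\bigr)$ need not be real-valued on $\mfu(n,n)$, so one should pass to real and imaginary parts (or use even powers of $Y$); and the spanning of the differentials of the $\phi_k,\psi_{j,k}$ at every point of $\cP^R$ is asserted rather than proved, though on this last point the paper itself is equally laconic.
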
 

\subsection{Solution algorithm for the RSvD model}
In this subsection we work out an efficient 
solution algorithm for the rational $BC_n$ RSvD 
model built on the projection method naturally 
offered by the symplectic reduction framework. 
For, take an arbitrary $K \times K$-invariant 
real-valued smooth function 
$f \colon G \rightarrow \bR$ defined on the Lie
group $G$, i.e. we assume that
\be
f((k_L, k_R) \acts y)
= f(k_L y k_R^{-1}) 
= f(y)
\qquad
(\forall y \in G, \,
\forall (k_L, k_R) \in K \times K). 
\label{f_invariant}
\ee
Let
\be
\pr_G \colon \cP^\ext 
= G \times \mfg \times \cO
\twoheadrightarrow G
\label{pr_G}
\ee
denote the canonical projection onto $G$;
then the composite function
$\pr_G^* f = f \circ \pr_G$ is clearly
$K \times K$-invariant on the unreduced
phase space $\cP^\ext$. Thus, based on 
the defining formula 
\be
(\pi^R)^*(\pr_G^* f)^R = (\Upsilon^R)^* \pr_G^* f,
\ee
it is easy to see that the corresponding reduced 
Hamiltonian $(\pr_G^* f)^R \in C^\infty(\cP^R)$ 
has the form
\be
(\pr_G^* f)^R (\lambda, \theta)
= f( \cA(\lambda, \theta)^\half h(\lambda)^{-1})
\qquad
((\lambda, \theta) \in \cP^R).
\label{R_reduced_Hamiltonian}
\ee
As we have discussed it in the Sutherland
picture, the essence of the projection method is
that the flows of the unreduced Hamiltonian system 
staying on the level space $\mfL_0$ project onto
the flows of the reduced Hamiltonian system.
However, from the defining formula
\be
\bsX_{\pr_G^* f} \inner \omega^\ext 
= \ddd (\pr_G^* f),
\ee 
we see immediately that at
each $(y, Y, \rho) \in \cP^\ext$ the
Hamiltonian vector field $\bsX_{\pr_G^* f}$
generated by $\pr_G^* f$ has the form
\be
(\bsX_{\pr_G^* f})_{(y, Y, \rho)}
= 0_y \oplus (- \nabla f(y))_Y \oplus 0_\rho
\in T_{(y, Y, \rho)} \cP^\ext,
\ee
where the gradient $\nabla f(y) \in \mfg$ 
is defined by the requirement
\be
\langle \nabla f(y), y^{-1} \delta y \rangle
= (\ddd f)_y (\delta y)
\qquad
(\forall \delta y \in T_y G).
\label{G_gradient}
\ee
Therefore the Hamiltonian flows of $\pr_G^* f$
are \emph{complete}, having the very simple form
\be
\bR \ni t 
\mapsto 
(y_0, Y_0 - t \nabla f(y_0), \rho_0) \in \cP^\ext,
\label{R_unreduced_flow}
\ee
where $(y_0, Y_0, \rho_0) \in \cP^\ext$ is an
arbitrary point. Hence the reduced flows are 
complete as well. 

Now take an arbitrary flow
\be
\bR \ni t 
\mapsto 
(\lambda(t), \theta(t)) \in \cP^R
\label{R_reduced_flow}
\ee
of the reduced Hamiltonian system
$(\cP^R, \omega^R, (\pr_G^* f)^R)$.
For simplicity let us now introduce the 
shorthand notations
\be
\cA_0 = \cA(\lambda(0), \theta(0)),
\quad
h_0 = h(\lambda(0)),
\quad
\Lambda_0 = \Lambda(0),
\quad
\cV_0 = \cV(\lambda(0), \theta(0)).
\ee
It is obvious that the unreduced flow
\be
\bR \ni t
\mapsto
(\cA_0^\half h_0^{-1}, 
h_0 \Lambda_0 h_0^{-1} 
- t \nabla f(\cA_0^\half h_0^{-1}),
\xi(\cV_0)) \in \mfL_0
\ee
projects onto (\ref{R_reduced_flow}). Recalling
$\Upsilon^R$ (\ref{Upsilon_R}), we see that for
each $t \in \bR$ we can find some pair of group
elements $(\eta_L(t), \eta_R(t)) \in K \times K$
such that
\begin{align}
& \cA_0^\half h_0^{-1} 
= \eta_L(t) \cA(\lambda(t), \theta(t))^\half 
h(\lambda(t))^{-1} \eta_R(t)^{-1}, 
\label{R_G_component}
\\
& h_0 \Lambda_0 h_0^{-1} 
- t \nabla f(\cA_0^\half h_0^{-1})
= \eta_R(t) h(\lambda(t)) \Lambda(t) 
h(\lambda(t))^{-1} \eta_R(t)^{-1}, 
\label{R_mfg_component}
\\
& \xi(\cV_0) 
= \eta_L(t) 
\xi(\cV(\lambda(t), \theta(t))) 
\eta_L(t)^{-1}.
\label{R_cO_component}
\end{align}
From (\ref{R_G_component}) we conclude that
\be
h_0^{-1} \cA_0 h_0^{-1}
= \eta_R(t) h(\lambda(t))^{-1} 
\cA(\lambda(t), \theta(t)) 
h(\lambda(t))^{-1} \eta_R(t)^{-1},
\ee
which entails the spectral identification
\be
\sigma(h_0^{-1} \cA_0 h_0^{-1})
= \sigma(h(\lambda(t))^{-1} 
\cA(\lambda(t), \theta(t)) 
h(\lambda(t))^{-1}). 
\ee
Thus, during the time evolution of the reduced
Hamiltonian system, the positive definite 
Hermitian matrix
\be
\cA^\bc(\lambda, \theta) 
= h(\lambda)^{-1} 
\cA(\lambda, \theta) 
h(\lambda)^{-1}
\in G
\label{R_Lax_matrix}
\ee
undergoes an \emph{isospectral} deformation.
Meanwhile, from (\ref{R_mfg_component}) it 
follows that
\be
\sigma(h_0 \Lambda_0 h_0^{-1} 
- t \nabla f(\cA_0^\half h_0^{-1}))
= \sigma(\Lambda(t))
= \{ \lambda_1(t), \ldots, \lambda_n(t),
- \lambda_1(t), \ldots, - \lambda_n(t) \},
\ee
whence the trajectory $t \mapsto \lambda(t)$ can
be recovered simply by diagonalizing the 
\emph{linear} matrix
flow
\be
t \mapsto
h_0 \Lambda_0 h_0^{-1} 
- t \nabla f(\cA_0^\half h_0^{-1}).
\label{R_matrix_flow}
\ee
It is worth mentioning that, due to its linearity
in $t$, the temporal asymptotics of the above matrix 
flow (\ref{R_matrix_flow}) can be analyzed by 
elementary perturbation theoretic techniques. Note 
that this observation could serve as the starting 
point of the scattering theoretic analysis of the 
reduced Hamiltonian system 
$(\cP^R, \omega^R, (\pr_G^* f)^R)$.

Our considerations so far apply to any reduced
system associated with a  
$K \times K$-invariant smooth function $f$ 
(\ref{f_invariant}). Note, however, that under
the assumption $\nu \kappa \geq 0$ the 
rational $BC_n$ RSvD model with three independent 
coupling constants can be nicely fitted into this 
picture. Indeed, upon introducing the 
$K \times K$-invariant function
\be
f_1(y) = \half \tr(y y^*)
\qquad
(y \in G),
\label{f_1}
\ee
one can verify that the corresponding reduced
Hamiltonian coincides with the RSvD Hamiltonian
(\ref{H_R}) associated with the coupling 
parameters $(\mu, \nu, \kappa)$, i.e.
\be
(\pr_G^* f_1)^R = H^R.
\label{f_1_reduced}
\ee
Let us observe that the assumption 
$\nu \kappa \geq 0$ automatically guarantees the 
lower bound $H^R > n$ on the RSvD Hamiltonian.
We mention in passing that the verification of 
(\ref{f_1_reduced}) is a quite tedious, but 
elementary calculation. Nevertheless, it can be 
done easily by utilizing the functional identities 
collected in the appendix of 
\cite{Pusztai_NPB2011}. 

Turning to the solution algorithm based on
(\ref{R_matrix_flow}), notice that for the 
gradient (\ref{G_gradient}) of the function 
$f_1$ we have
\be
\nabla f_1(y)
= \half \left(
y^*y - (y^* y)^{-1}
\right) \in \mfp
\qquad
(y \in G).
\ee
Therefore, from (\ref{R_Lax_matrix}) and 
(\ref{R_matrix_flow}) we see at once that the 
trajectories of the rational $BC_n$ RSvD model 
can be determined by diagonalizing the matrix 
flow
\be
t \mapsto h_0 \Lambda_0 h_0^{-1}
- \half t 
\left( \cA^\bc_0 - (\cA^\bc_0)^{-1} \right),
\label{RSvD_flows_spec}
\ee
where 
$\cA^\bc_0 = \cA^\bc(\lambda(0), \theta(0))$.
We find it remarkable that the properties of 
dynamics generated by the highly non-trivial 
Hamiltonian $H^R$ (\ref{H_R}) can be 
captured by analyzing the linear matrix flow
(\ref{RSvD_flows_spec}).

To sum up, we see that under the assumption
$\nu \kappa \geq 0$ the rational $BC_n$
RSvD model (\ref{H_R}) can be derived from 
an appropriate symplectic reduction 
framework. Parallel to our discussion 
in Section \ref{S3} on 
the Sutherland model (\ref{H_S}) with 
$g_1^2 < 0$, we expect that the RSvD model 
with $\nu \kappa < 0$ can also be understood 
from symplectic reduction by generalizing 
Lemma \ref{L1}. We wish to come back to this 
issue in a later publication.

\section{Discussion}
\label{S5}
\setcounter{equation}{0}
In the previous two sections we derived both the
standard hyperbolic $BC_n$ Sutherland and the 
rational $BC_n$ RSvD models from a unified 
symplectic reduction framework. The derivation of 
the Sutherland model relies on the $KAK$ 
decomposition of the Lie group part of the level 
set $\mfL_0$ (\ref{mfL_0}), meanwhile the 
symplectic geometric understanding of the RSvD 
model builds upon the diagonalization of the Lie 
algebra part of $\mfL_0$. Thereby, by performing 
the Marsden--Weinstein reduction of the symplectic 
manifold $(\cP^\ext, \omega^\ext)$ at the zero 
value of the momentum map $J^\ext$ (\ref{J_ext}), 
we end up with two equivalent realizations, $\cP^S$ 
and $\cP^R$, of the \emph{same} symplectic quotient 
$\cP^\ext /\!/_0 (K \times K)$. Thus it is obvious 
that there is a natural symplectomorphism
$\cS \colon \cP^S \rightarrow \cP^R$ 
making the diagram
\be
\begin{split}
\xymatrix{
 & \cP^\ext & 
\\
\cM^S \ar[r]^{\Upsilon^S_0}_{\cong} 
\ar@{>>}[d]_{\pi^S} 
& \mfL_0 \ar@{^{(}->}[u]^{\iota_0} 
& \ar[l]_{\Upsilon^R_0}^{\cong} 
\ar@{>>}[d]^{\pi^R} \cM^R 
\\
\cP^S \ar[rr]_{\cS}^{\cong} & & \cP^R
}
\label{diagram}
\end{split}
\ee
commutative. In the rest of this section we examine 
some of the immediate consequences of the dual 
reduction picture (\ref{diagram}). First, let us 
define the functions
\be
\hat{\lambda}_c = \cS^* \lambda_c,
\quad
\hat{\theta}_c = \cS^* \theta_c,
\quad
\check{q}_c = (\cS^{-1})^* q_c,
\quad
\check{p}_c = (\cS^{-1})^* p_c,
\label{new_coordinates}
\ee
where $c \in \bN_n$. Since $\cS$ is a 
symplectomorphism, from Theorems \ref{T3} and 
\ref{T5} it follows that
\be
\omega^S = \cS^* \omega^R
= 2 \sum_{c = 1}^n 
\ddd \hat{\theta}_c \wedge \ddd \hat{\lambda}_c 
\quad \mbox{and} \quad
\omega^R = (\cS^{-1})^* \omega^S
= 2 \sum_{c = 1}^n 
\ddd \check{q}_c \wedge \ddd \check{p}_c.
\ee
In other words, the globally defined functions
$\hat{\lambda}_c$ and $\hat{\theta}_c$ provide 
a new Darboux system on the Sutherland phase space 
$\cP^S$, meanwhile the global coordinates 
$\check{q}_c$ and $\check{p}_c$ are canonical
on the RSvD phase space $\cP^R$. Now
we show that these new families of canonical 
coordinates give rise to natural action-angle 
variables for the Sutherland and the RSvD models, 
respectively.

Starting with the Sutherland side of the dual
reduction picture, take an arbitrary point
$(q, p) \in \cP^S$ and let 
$(\lambda, \theta) = \cS(q, p) \in \cP^R$. Now,
recalling the parametrizations $\Upsilon^S$
(\ref{Upsilon_S}) and $\Upsilon^R$ 
(\ref{Upsilon_R}), from the commutativity of
the diagram (\ref{diagram}) it is clear that
\be
(e^Q, L(q, p), \xi(E))
= (\eta_L \cA(\lambda, \theta)^\half 
h(\lambda)^{-1} \eta_R^{-1},
\eta_R h(\lambda) \Lambda 
h(\lambda)^{-1} \eta_R^{-1},
\eta_L \xi(\cV(\lambda, \theta)) 
\eta_L^{-1})
\label{S_key}
\ee
with some group elements $\eta_L, \eta_R \in K$. 
By inspecting the $\mfg$-component of the above 
equation we obtain the spectral identification
\be
\sigma(L(q, p)) = \sigma(\Lambda)
= \{ \pm \hat{\lambda}_c(q, p) 
\, | \, 
c \in \bN_n \}.
\label{S_picture_spectral}
\ee

Next, take an arbitrary real-valued $\Ad$-invariant 
smooth function $F \colon \mfg \rightarrow \bR$ 
(\ref{F_invariant}) defined on 
$\mfg$, and consider the naturally 
associated reduced Hamiltonian system 
$(\cP^S, \omega^S, (\pr_\mfg^* F)^S)$. Due
to (\ref{S_reduced_Hamiltonian}) and (\ref{S_key}) 
it is clear that
\be
(\pr_\mfg^* F)^S = F \circ L
= F \circ \Lambda
= F (\diag(\hat{\lambda}_1, \ldots, \hat{\lambda}_n,
-\hat{\lambda}_1, \ldots, -\hat{\lambda}_n)),
\ee
i.e. the reduced Hamiltonian depends only on the
coordinates $\hat{\lambda}_c$ $(c \in \bN_n)$.
It follows that the global coordinates
$\hat{\lambda}_c$ and $\hat{\theta}_c$ 
$(c \in \bN_n)$ provide canonical 
\emph{action-angle variables} for the 
reduced system 
$(\cP^S, \omega^S, (\pr_\mfg^* F)^S)$.
Note that the action and the angle coordinates 
of the Sutherland picture are exactly the 
pull-backs of the canonical positions and the
canonical momenta of the Ruijsenaars picture. 

To conclude the study of the Sutherland side of
the dual reduction picture (\ref{diagram}), let 
us recall that the hyperbolic $BC_n$ Sutherland 
model can be realized as the reduced Hamiltonian 
system generated by the quadratic $\Ad$-invariant
function $F_2$ (\ref{F_2}). Therefore the above 
construction of action-angle coordinates applies 
to the repulsive Sutherland model equally well. 
Also, from (\ref{S_picture_spectral}) we see that 
the positive eigenvalues of $L$ (\ref{L}) provide 
$n$ functionally independent first integrals in 
involution. Furthermore, the matrix $L$ naturally 
enters the solution algorithm of the Sutherland 
model (see equation (\ref{S_matrix_flow_spec})), 
whence the non-Hermitian matrix $L$ is indeed a 
\emph{Lax matrix} for the Sutherland many-particle
system.

In the following we turn our attention to the 
Ruijsenaars side of the dual reduction picture
(\ref{diagram}). For, take an arbitrary point
$(\lambda, \theta) \in \cP^R$ and let
$(q, p) = \cS^{-1}(\lambda, \theta) \in \cP^S$.
Recalling the mappings $\Upsilon^R$ 
(\ref{Upsilon_R}) and $\Upsilon^S$ 
(\ref{Upsilon_S}), it is clear that
\be
(\cA(\lambda, \theta)^\half h(\lambda)^{-1},
h(\lambda) \Lambda h(\lambda)^{-1},
\xi(\cV(\lambda, \theta)))
= (\eta_L e^Q \eta_R^{-1},
\eta_R L(q, p) \eta_R^{-1},
\eta_L \xi(E) \eta_L^{-1})
\label{R_picture_key}
\ee
with some group elements $\eta_L, \eta_R \in K$. 
Now, remembering the definition of the positive 
definite matrix $\cA^\bc$ (\ref{R_Lax_matrix}), 
notice that the $G$-component of the above equation 
immediately leads to the relationship
\be
\cA^\bc(\lambda, \theta)
= h(\lambda)^{-1} \cA(\lambda, \theta) 
h(\lambda)^{-1}
= \eta_R e^{2 Q} \eta_R^{-1},
\ee
from where we get the spectral identification
\be
\sigma(\cA^\bc(\lambda, \theta))
= \{ e^{\pm 2 \check{q}_c(\lambda, \theta)}
\, | \,
c \in \bN_n \}.
\ee  
We see that $\cA^\bc(\lambda, \theta)$ has a 
\emph{simple} spectrum, and the positive 
eigenvalues of the Hermitian matrix 
$\ln(\cA^\bc) / 2$ are exactly the coordinate 
functions $\check{q}_c$ $(c \in \bN_n)$. 

To proceed further, take an arbitrary  
$K \times K$-invariant smooth function 
$f \colon G \rightarrow \bR$ (\ref{f_invariant}),
and consider the naturally generated
reduced Hamiltonian system
$(\cP^R, \omega^R, (\pr_G^* f)^R)$.
Recalling (\ref{R_reduced_Hamiltonian})
and (\ref{R_picture_key}), for the reduced
Hamiltonian we have
\be
(\pr_G^* f)^R(\lambda, \theta)
= f(\cA(\lambda, \theta)^\half h(\lambda)^{-1})
= f(\diag(e^{\check{q}_1(\lambda, \theta)}, 
\ldots, e^{\check{q}_n(\lambda, \theta)},
e^{-\check{q}_1(\lambda, \theta)}, 
\ldots, e^{-\check{q}_n(\lambda, \theta)})),
\ee
i.e. the reduced Hamiltonian depends only on the
coordinates $\check{q}_c$ $(c \in \bN_n)$. Thus, it 
is immediate that the global canonical coordinates
$\check{q}_c$ and $\check{p}_c$ $(c \in \bN_n)$
form an \emph{action-angle system} for the 
mechanical system 
$(\cP^R, \omega^R, (\pr_G^* f)^R)$. Let us 
observe that the action and the angle coordinates 
of the Ruijsenaars picture are coming from the 
pull-backs of the canonical positions and the 
canonical momenta of the Sutherland picture.

Now remember that the reduced Hamiltonian system
$(\cP^R, \omega^R, (\pr_G^* f_1)^R)$ generated
by the $K \times K$-invariant function $f_1$
(\ref{f_1}) coincides with the rational $BC_n$
RSvD model (\ref{H_R}) with three independent 
coupling constants. Therefore the canonical
coordinates $\check{q}_c$ and $\check{p}_c$ 
$(c \in \bN_n)$ provide action-angle variables
for the RSvD model as well. Notice also that the 
positive definite matrix $\cA^\bc$ 
(\ref{R_Lax_matrix}) plays a distinguished role
in the theory of the rational RSvD model. Indeed, 
the positive eigenvalues of the
Hermitian matrix $\ln(\cA^\bc) / 2$ give rise to 
$n$ functionally independent first integrals in 
involution. In particular, the matrix $\cA^\bc$
undergoes an isospectral deformation during the
time evolution of the RSvD dynamics. Also, 
remember that $\cA^\bc$ naturally appears in the 
solution algorithm of the model, as can be seen 
in equation (\ref{RSvD_flows_spec}). Therefore 
$\cA^\bc$ (\ref{R_Lax_matrix}) meets all the
criteria to call it the \emph{Lax matrix} of the 
rational $BC_n$ RSvD model.

To sum up, we constructed action-angle systems 
of canonical coordinates for both the repulsive 
hyperbolic $BC_n$ Sutherland and the rational 
$BC_n$ RSvD models with three independent
coupling constants. The relationships between
the coupling parameters of the corresponding
particle systems are displayed in equations
(\ref{S_coupling_parameters_from_RSvD}) and 
(\ref{RSvD_coupling_parameters_from_S}).
As we have seen, the action 
and the angle coordinates of the Sutherland model 
can be naturally identified with the canonical 
positions and the canonical momenta of the RSvD 
model, and vice versa. That is to say, making use 
of the dual reduction picture (\ref{diagram}), we 
established the \emph{action-angle duality} between 
the standard $BC_n$-type Sutherland and RSvD 
models. This interesting phenomenon was originally 
discovered by Ruijsenaars in the context of the 
$A_n$-type particle systems \cite{RuijCMP1988}. 
Using advanced techniques from symplectic geometry, 
in the last years the $A_n$-type dualities 
have been reinterpreted in the 
reduction framework, too (see the papers 
\cite{FeherKlimcik0906},
\cite{FeherKlimcik_quasi}).
It appears to be an attractive 
research problem for the future
to generalize these techniques to
the non-$A_n$-type setup.

We conclude the paper with some remarks on the 
possible applications of our results.
Besides the natural appearance of the Sutherland 
and the RSvD many-particle systems in the soliton 
scattering description of certain integrable field 
theories (see e.g. \cite{RuijSchneider}, 
\cite{RuijFiniteDimSolitonSystems}
\cite{BabelonBernard}, \cite{KapustinSkorik}), 
we expect that our results find applications in 
the theory of random matrices as well. Indeed, 
by exploiting the existing dualities  
between the $A_n$-type particle systems, the 
authors of the recent papers \cite{Bogomolny_PRL} 
and \cite{Bogomolny} have introduced new classes 
of random matrix ensembles with novel spectral 
statistical properties. Built on the Lax matrices 
$L$ (\ref{L}) and $\cA^\bc$ (\ref{R_Lax_matrix}), 
the proposed dual reduction picture (\ref{diagram}) 
seems to be indispensable in initiating the study 
of the \emph{integrable random matrix ensembles}
associated with non-$A_n$-type root systems.

\medskip
\noindent
\textbf{Acknowledgments.}
This work was partially supported by the Hungarian
Scientific Research Fund (OTKA) under grant
K 77400.


\end{document}